\newcommand{\R}{\mathbb{R}}
\newcommand{\rr}{{\mathbf{r}}}
\newcommand{\D}{\mathrm{d}}
\newcommand{\Li}{\mathcal{L}}
\DeclareMathOperator{\sgn}{sgn}
\DeclareMathOperator{\tr}{tr}
\newtheorem{theorem}{Theorem}
\newtheorem{proposition}{Proposition}
\newtheorem{lemma}{Lemma}
\newtheorem{corollary}{Corollary}
\theoremstyle{definition}
\theoremstyle{remark}
\newtheorem{remark}{Remark}
\title{Time-optimal state transfer for an open qubit\footnote{This work was funded by the Ministry of Science and Higher Education of Russian Federation (grant number 075-15-2020-788)}}
\author[1,2]{Lokutsievskiy L.V.}
\author[1,3]{Pechen A.N.}
\author[2]{Zelikin M.I.}
\affil[1]{Steklov Mathematical Institute of Russian Academy of Sciences, 8 Gubkina str., Moscow 119991, Russia}
\affil[2]{Lomonosov Moscow State University, GSP-1, Leninskie Gory, Moscow 119991, Russia}
\affil[3]{National University of Science and Technology ”MISIS”, 4 Leninsky Prosp.,
Moscow 119991, Russia}
\begin{document}

\maketitle

\begin{abstract}
Finding minimal time and establishing the structure of the corresponding optimal controls which can  transfer a given initial state of a quantum system into a given target state is a key problem of quantum control. In this work, this problem is solved for a basic component of various quantum technology processes --- a qubit interacting with the environment and experiencing an arbitrary time-dependent coherent driving. We rigorously derive both upper and lower estimates for the minimal steering time. Surprisingly, we discover that the optimal controls have a very special form --- they consist of two impulses, at the beginning and at the end of the control period, which can be  assisted by a smooth time-dependent  control in between. Moreover, an important for practical applications explicit almost optimal state transfer protocol is provided which only consists of four impulses and gives an almost optimal time of motion. The results can be directly applied to a variety of experimental situations for estimation of the ultimate limits of state control for quantum technologies.
\end{abstract}

\section{Introduction}

Estimation of minimal time necessary for steering a quantum system from a given initial state to a given target state is among key problems for quantum technologies~\cite{Schleich_etal_2016,Acin_etal_2018,Koch_etal_2022}. An important example is a two-level quantum system, which serves as a practical model for a variety of experimental situations and basic model for quantum computation and information transmission. As an example, two-level systems describe single qubits for various experimental quantum computation platforms~\cite{Ohtsuki_Mikami_Ajiki_Tannor_2023}, they appear as spin 1/2 systems in Nuclear Magnetic Resonance (NMR)~\cite{Lapert_Zhang_Braun_Glaser_Sugny_2010,Zhang_Lapert_Sugny_Braun_Glaser_2011}, describe electronic excitations in light-harvesting systems~\cite{Fassioli_Dinshaw_Arpin_Scholes_2014}, as well as many other physical situations. In many such tasks, the problem of minimal time steering of an initial state to a target state naturally arises. For example, in quantum computation it is motivated by the needs of fast state initialization, in NMR  by the needs of fast preparation of the maximally mixed state of a spin in the surrounding environment~\cite{Lapert_Zhang_Braun_Glaser_Sugny_2010}, in photosynthesis it corresponds to minimization of the recombination time~\cite{Kozyrev_Pechen_2022} necessary for fast electron transport. For this reason, the analysis of time-optimal control of two-level systems attracts an extremely high interest. 

In some cases, the two-level system is considered as a closed system with coherent driving, whose evolution is described by the Schr\"odinger equation. In many situations however the assumption that the system is isolated from the environment is too strong and one has to consider it as an open quantum system. Time-optimal control for a two-level quantum system evolving according to a Gorini--Kossakowski--Lindblad--Sudarshan (GKSL) master equation modeling population relaxation and dephasing with bounded coherent control was studied in~\cite{Sugny_Kontz_Jauslin_2007} based on the fundamental work~\cite{BoscainPicolli12004} devoted to control systems on 2D manifolds. Optimal control at the quantum speed limit for the Landau-Zener two-level quantum system with variable magnetic field was studied in~\cite{Caneva_2009}. For general closed two-level systems with two or three drives, quantum speed limit was established in~\cite{Hegerfeldt_2013,Hegerfeldt_2014}, where it was found that optimal protocol consists of two $\delta$-like pulses and a period in between. Time-optimal protocol for driving a general initial state to a target state by a single control field with bounded amplitude was established in~\cite{Lin_Sels_Wang_2020}. Time-optimal universal control of Hamiltonian two-level systems under strong driving was investigated~\cite{Avinadav_2014}. Geometric quantum speed limits for Markovian dynamics with constant Hamiltonians were studied in~\cite{Lan_Xie_Cai_2022}. An interesting analysis was performed for one or two bounded piecewise constant controls using a suitable extension of the Pontryagin maximum principle~\cite{Dionis_Sugny_2023}. 

The key problem of estimating minimal time and establishing the structure of the corresponding optimal protocols for an open quantum system which evolves under a dissipative evolution with arbitrary time-dependent controls in the Hamiltonian has yet been remained unsolved. This work completely solves this problem allowing to obtain surprising results on the structure of the optimal controls. The solution requires a combination of special subtle techniques form geometric control theory. For this, we consider the dynamics of a two-level open quantum system driven by an arbitrary time-dependent control $u(t)$ as described by a dissipative GKSL master equation
\begin{equation}
\label{eq:master}
\frac{{\rm d} \rho}{{\rm d} t} = -\frac{{\rm i}}{\hbar}[H_0+u(t)V,\rho] + \Li(\rho).
\end{equation}
Here, $\rho$ is the $(2\times2)$ system density matrix, $H_0$ is the free Hamiltonian, $u(t)\in\R$ is the coherent control, $V$ is the coherent control Hamiltonian, and the GKSL superoperator $\Li$ described a non-unitary interaction of the system with its surrounding environment. The time optimal problem is to find a control $u(t)$ such that
\begin{equation}
\label{problem:master}
    \begin{gathered}
    \rho(0)=\rho^0;\quad\rho(T)=\rho^1;\\
        T\to\min.
    \end{gathered}
\end{equation}
Here $\rho^0$ and $\rho^1$ are two given initial and final density matrices, and $T\ge 0$ denotes the time necessary for steering $\rho^0$ into $\rho^1$. Without interaction with the environment, the system is obviously uncontrollable in the set of all density matrices, for example because it is impossible to convert any pure state into a mixed state solely through unitary transformations.

In~\cite{LokutsievskiyPechen2021}, it was shown that the two-level control system is uncontrollable even in the presence of incoherent control. However, in this work it was demonstrated that the system is ''almost'' controllable in the following sense. Let $\omega>0$ be the eigenfrequency of the system and $\gamma>0$ be the decoherence rate, $\gamma\ll\omega$. Then, any state $\rho^1$ with a purity level 
\[
    P(\rho)=\tr\rho^2<1-\frac\pi4\frac\gamma\omega
\]
can be reached from any other state $\rho^0$ in a finite time. Surprisingly, the structure of the reachable set does not change if incoherent control is added to the system~\cite{LokutsievskiyPechen2021}.

In this work, estimates for the minimal steering  time and the structure of the corresponding optimal and almost optimal controls are found. In details,
\begin{itemize}
\item We obtain upper and lower bounds on the optimal motion time $T$ depending on the initial and final states $\rho^0$ and $\rho^1$. For the initial and final states, let us denote $\mu_{0,1}=\sqrt{2\tr(\rho^{0,1})^2 - 1}\in[0;1]$, and $\sigma=\sgn\,(\mu^1-\mu^0)=\pm1$, then
\[
  \frac1\gamma\ln\left(\frac{1-\sigma \mu_0}{1-\sigma \mu_1}\right) \le
  T\le
  \frac{\pi}{\omega} + \frac{e}\gamma + 
   \frac1\gamma \ln\left(\frac{1-\sigma \mu_0}{1-\sigma \mu_1}\right).
\]
The lower bound holds when $\mu_1<1$, and the upper bound holds when $\mu_1\le 1-\frac\pi2\frac\gamma\omega$ (see Theorem~\ref{thm:time_estimates}).
 
\item We prove that optimal controls are in the class of impulse controls and show an example of the absence of optimal control in the class $L_1$ (Proposition~\ref{prop:auxiliary_existence}, Theorem~\ref{thm:times_original_auxiliary_equal}, and Remark~\ref{rm:not_L1_but_impulse}).
	
\item We prove that the optimal control is a sum of at most two impulses in the form of Dirac delta functions at the initial and final time moments and an analytic function in between (Theorem~\ref{thm:times_original_auxiliary_equal}).

\item We suggest an explicit form of an almost optimal control with four impulses implementing the upper bound on the optimal motion time $T$ (formulas~\eqref{eq:theta_suboptimal}, \eqref{eq:v_suboptimal} and~\eqref{eq:u_v_via_theta}).
\end{itemize}

Figure~\ref{Fig1} shows lower and upper estimates for the minimal time necessary for moving between sets with purities $P_0=(1+\mu_0^2)/2$ and $P_1=(1+\mu_1^2)/2$. Time is in units of inverse $\gamma$. Bottom estimate is $\ln\left(\frac{1-\sigma \mu_0}{1-\sigma \mu_1}\right)$. Upper estimate is plotted as $e+\ln\left(\frac{1-\sigma \mu_0}{1-\sigma \mu_1}\right)$, i.e. neglecting term $\pi/\omega$ since typically $\omega\ll\gamma$.

\begin{figure}
\centering
\includegraphics[width=\linewidth]{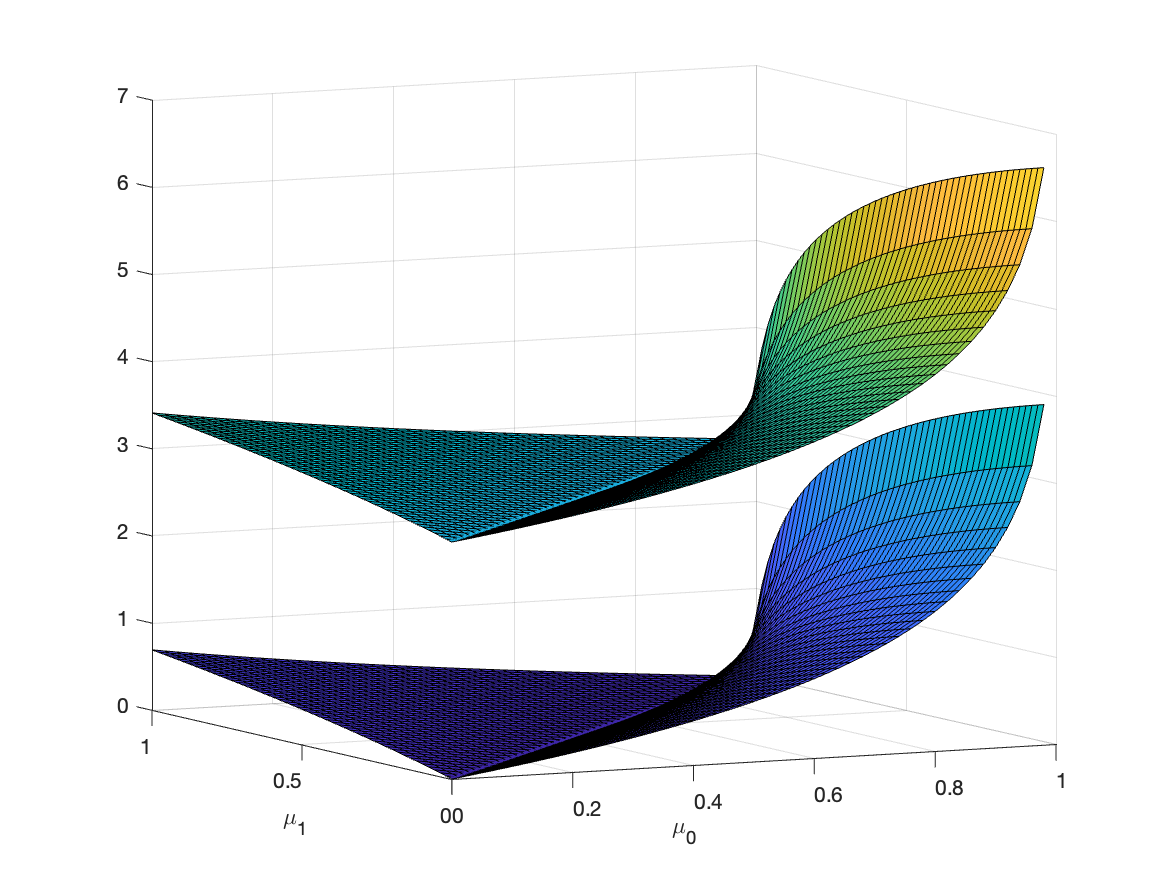}
\caption{This figure shows lower and upper estimates for the minimal time as a function of purity of the initial and final states. The upper estimate is plotted neglecting term $\pi/\omega$ which is typically several orders of magnitude smaller than term $e/\gamma$. Time is in the units of inverse $\gamma$.}
\label{Fig1}
\end{figure}


\section{Controlled qubit interacting with the environment}

State of a qubit is described by a density matrix $\rho$, which is a $(2\times2)$ positive definite Hermitian matrix with unit trace, $\rho\in\mathbb C^{(2\times 2)}$, $\rho\ge 0$, $\tr\rho=1$. The control of the qubit is achieved through coherent interaction in the Hamiltonian. The influence of the environment leads to the appearance of incoherent terms~\cite{PechenRabitz2006}, for which we consider a GKSL form. Let $\sigma_x$, $\sigma_y$, and $\sigma_z$ be the standard Pauli matrices. Then, the free and interaction Hamiltonians of the unitary part of the master equation~\eqref{eq:master} have the form $H=\frac12\omega\sigma_z$, where $\omega>0$ is the qubit's eigentransition frequency, and $V=\frac12\kappa\sigma_x$, where $\kappa>0$ is the coefficient of coupling of the qubit to the external coherent control. Thus, the control system is described by the master equation
\begin{equation}
\label{eq:main_quantum_system}
	\frac{{\rm d\,} \rho(t)}{{\rm d\,} t} =
	-\frac{{\rm i\,}}{\hbar}\Bigl[\frac12\omega \sigma_z + \frac12\kappa u(t)\sigma_x, \rho(t)\Bigr] 
	+\gamma \Big( \sigma^- \rho(t) \sigma^+ 
	- \frac{1}{2} \Big\{ \sigma^+ \sigma^-, \rho(t) \Big\} \Big).
\end{equation}
Here, $u(t)\in\mathbb{R}$ is the coherent control, and the last GKSL term represents the non-unitary interaction with the external zero-temperature environment, where $\gamma\ge 0$ is the decoherence coefficient,
\[
	\sigma^+=\left(\begin{array}{cc}
	0 & 1 \\
	0 & 0  
	\end{array}\right)
	\quad\text{and}\quad
	\sigma^-= \left(\begin{array}{cc}
	0 & 0 \\
	1 & 0
	\end{array}\right)
\]
denote the creation and annihilation operators, $[\cdot,\cdot]$ and $\{\cdot,\cdot\}$ represent the commutator and anticommutator of matrices, respectively. For the chosen qubit implementation, $\omega$, $\kappa$, and $\gamma$ are constants. Furthermore, we set $\hbar=1$.

The paper addresses the time minimization problem $T\to\min$ for equation~\eqref{eq:main_quantum_system} with certain boundary conditions. The optimal motion time from state $\rho^0$ to state $\rho^1$ is defined as the infinum of all times $T$ necessary for the system to transition from the initial state $\rho(0)=\rho^0$ to the final state $\rho(T)=\rho^1$ using all possible control functions $u(t)$. If no control function $u(t)$ is able to change the state from $\rho^0$ to $\rho^1$, we set $T=+\infty$.

In addition to the two-point time minimization problem, we will consider more general formulations of the form $\rho(0)\in M_0$ and $\rho(T)\in M_1$ for some sets $M_0$ and $M_1$. In such formulations, we define the minimum motion time considering all possible choices of control $u(t)$ and initial and final states, as long as they belong to $M_0$ and $M_1$, respectively. In particular, an important problem is the fastest attainment of a given purity level of the final state $P(\rho(T))$ from a given purity level of the initial state $P(\rho(0))$.

\begin{remark}
    Since the system~\eqref{eq:main_quantum_system} is affine with respect to the control $u(t)$, it is natural to assume that $u\in L_1$. However, such a choice of the control class is not suitable for this problem. Since the set of controls is unbounded, i.e., $u\in\R$, the time minimization problem $T\to\min$ may not have (and most likely will not have) an optimal solution in the class $u\in L_1$ as the following example shows:
    \[
        \begin{gathered}
            T\to\min\\
            \rho(0)=\frac12\big(\mathbb I + \frac12\sigma_y\big),\quad 
            \rho(T)=\frac12\big(\mathbb I + \frac12\sigma_z\big).
        \end{gathered}
    \]
    In fact, by rotating using the Hamiltonian $\frac12\kappa u(t)\sigma_x$ with an arbitrary large control $u(t)$, it is possible to connect these two states in an infinitely small time, but $T=0$ cannot be achieved with $u\in L_1$.
    
    Nevertheless, we will show that an optimal control can always be found in the class of impulse controls. For this reason, considering the control in the problem~\eqref{eq:main_quantum_system} as impulsive is much more reasonable. Further details on this question are discussed in Section~\ref{sec:auxiliary_problem}.
\end{remark}

Let us rewrite system~\eqref{eq:main_quantum_system} in terms of Pauli matrices. Density matrix of the qubit can be written in coordinate form as
\[
	\rho=\frac{1}{2}\left(\mathbb I+r_x\sigma_x+r_y\sigma_y+r_z\sigma_z\right),
\]
where $\mathbf{r}=(r_x,r_y,r_z)\in\mathbb{R}^3$. The vector $\mathbf{r}$ belongs to the Bloch ball, $|\mathbf{r}|\leq 1$, since $1-|\mathbf{r}|^2=4\det\rho\geq 0$. Moreover, $1+|\mathbf{r}|^2=2\tr\rho^2$. Thus, $\mu=|\mathbf{r}|$ is naturally correlated with the purity level of the state $\mathbf{r}$. Standard notion of purity is $P(
\rho)=\tr\rho^2=(1+|\mathbf{r}|^2)/2$ and $P(\rho)\in[\frac12;1]$ (since the dimension is 2). So $|\mathbf{r}| = \sqrt{2\tr\rho^2-1}$ and states with a given level of purity $P(\rho)\ne\frac12$ form a sphere in the Bloch ball of radius $\sqrt{2P(\rho)-1}$. Pure states satisfy $|\mathbf{r}|=1$ and belong to the Bloch sphere. Mixed states satisfy $|\mathbf{r}|<1$ and lie strictly inside the Bloch sphere. 

In these coordinates, the control system~\eqref{eq:main_quantum_system} takes the following form:
\begin{equation}
\label{eq:general_control_system}
	\dot{\rr} = \omega f_0(\rr) + \kappa f_1(\rr)u,
\end{equation}
where (see~\cite{MorzhinIJTP2019} for details)
\begin{eqnarray*}
	f_0(\rr)
	&=&
	\left(\begin{array}{ccc}
		0&-1&0\\
		1&0&0\\
		0&0&0
	\end{array}\right)\rr
	-
	\frac{\gamma}{\omega}
	\left(\begin{array}{ccc}
		\frac{1}{2} & 0 & 0\\
		0 & \frac12 & 0\\
		0 & 0 & 1
	\end{array}\right)\rr
	+
	\frac{\gamma}{\omega}
	\left(\begin{array}{ccc}
		0\\0\\1
	\end{array}\right);\\
	f_1(\rr)&=&
	\left(\begin{array}{ccc}
		0&0&0\\
		0&0&-1\\
		0&1&0
	\end{array}\right)\rr.
\end{eqnarray*}
In the absence of external coherent control ($u\equiv 0$), the system state exponentially quickly approaches the north pole $\mathbf{r}=(0,0,1)$ as $t\to\infty$.


\section{Auxiliary 2D control problem in cylindrical coordinates}
\label{sec:auxiliary_problem}

Since the control $u\in\R$ allows for rotation of the system around the $Ox$ axis with arbitrarily large, a priori unbounded, angular velocity, it is natural to switch to cylindrical coordinates $(r_x,R,\theta)$ with the axis aligned along the $Ox$ axis. Set
\[
r_y=R\cos\theta;
\quad r_z=R\sin\theta.
\]
Then the control system~(\ref{eq:general_control_system}) takes the following form
\begin{equation}
\label{eq:main_system_in_R_theta}
    \begin{cases}
        \begin{pmatrix}
            \dot r_x\\
            \dot R
        \end{pmatrix}
        &=
        \begin{pmatrix}
            -\frac12\gamma & -\omega\cos\theta\\
            \omega\cos\theta & -\frac12\gamma(1+\sin^2\theta)
        \end{pmatrix}
        \begin{pmatrix}
            r_x\\
            R
        \end{pmatrix}
        +
        \begin{pmatrix}
            0\\
            \gamma\sin\theta
        \end{pmatrix}\\[0.5cm]
        \quad\dot\theta &=\kappa u -
        \omega\left(
            \frac14\frac\gamma\omega\sin2\theta +
            \frac1R\Big(
                r_x\sin\theta - \frac\gamma\omega\cos\theta
            \Big)
        \right)
    \end{cases}
\end{equation}

\begin{remark}
\label{rm:not_L1_but_impulse}
    In the $(r_x, R, \theta)$ coordinates, it is easy to see why there might not exist an optimal control in the class $L_1$. For instance, let us consider the case where the initial and final states satisfy $r_x^0 = r_x^1$, $R^0 = R^1$, but $\theta^0 \neq \theta^1$. In this scenario, $\inf T = 0$ (excluding some degenerate cases), but this infimum is not achieved by any control $u \in L_1$. However, a control with a single impulse at time zero precisely achieves $T = 0$.

\end{remark}

Extending the class of controls $u(t)$ automatically extends the class of functions for $\theta(t)$. The following change of variables for the control helps to choose a suitable class of functions for $\theta(t)$:
\begin{equation}
\label{eq:u_v_via_theta}
	\begin{gathered}
		v=\frac\kappa\omega u-
			\frac14\frac\gamma\omega\sin2\theta-
			\frac1R\Big(
                r_x\sin\theta-
                \frac\gamma\omega\cos\theta
            \Big);\\
		u=\frac\omega\kappa\left(
			v + 
			\frac14\frac\gamma\omega \sin2\theta +
            \frac1R\Big(
                r_x\sin\theta -
			    \frac\gamma\omega\cos\theta
            \Big)
			\right).
	\end{gathered}
\end{equation}

\begin{remark}
\label{rm:R_zeros}
    It should be noted that this substitution is non-degenerate only when $R(t)\neq 0$. Therefore, zeros of the function $R(t)$ require additional investigation.
\end{remark}

After the substitution \eqref{eq:u_v_via_theta}, the last equation in the control system \eqref{eq:main_system_in_R_theta} takes the form:
\begin{equation}
\label{eq:dot_theta_v}
    \dot\theta = \omega\,v.
\end{equation}

Thus, by choosing a suitable control function $v(t) \in L_1$, the last coordinate $\theta(t)$ can be made arbitrarily close in the $L_1$ metric to any given measurable function from $L_1$ (since the Sobolev space $W^1_1$ is dense in $L_1$). Therefore, it is natural to consider $\theta(t)$ as an arbitrary measurable function, $\theta(\cdot) \in L_1([0;T] \to \mathbb{R}/2\pi)$. Indeed, the first two equations of system~\eqref{eq:main_system_in_R_theta} are linear with respect to $(r_x, R)$, so for any measurable function $\theta(t)$ and any initial data $(r_x(0), R(0))$, these equations have a unique solution.

Hence, the natural class for the functions $\theta(t)$ is the class of measurable functions $L_1$. At first glance, the seemingly strange assumption of the discontinuity of the phase coordinate $\theta(t)$ is actually well justified, as we will show that the optimal function $\theta(t)$ chosen from all measurable functions in $L_1$ must be continuous on the entire sengment $[0;T]$ and analytic on the half-open segment $(0;T]$. So, from the perspective of the original problem, we are seeking the optimal control in a vast class of generalized functions $u(\cdot)\in H^{-1}$. However, it turns out that the optimal control $u(t)$ in this class contains only two impulses at the initial and final time moments $t=0,T$, while on the interval $(0;T)$ it is analytic (see Theorem~\ref{thm:times_original_auxiliary_equal}).

Therefore, instead of the time minimization problem for the original system~\eqref{eq:main_quantum_system}, it is natural to consider the following key auxiliary problem, in which there is no control $u$ (or equivalently, $v$), but instead the variable $\theta$ serves as the new control:
\begin{equation}
\label{eq:auxiliary_system}
	\begin{array}{c}
		T\to\min\\[0.2cm]
		(r_x(0),R(0))\in M_0;\quad (r_x(T),R(T))\in M_1;\\[0.2cm]
		\dfrac{1}{\omega}
		\begin{pmatrix}
			\dot r_x\\
			\dot R 
		\end{pmatrix}
		=
		\begin{pmatrix}
			-\frac12\frac\gamma\omega & -\cos\theta;\\
			\cos\theta  & -\frac12\frac\gamma\omega(1+\sin^2\theta)\\
		\end{pmatrix}
		\begin{pmatrix}
			r_x\\ 
			R
		\end{pmatrix}
		+
		\begin{pmatrix}
			0\\
			\frac\gamma\omega\sin\theta.
		\end{pmatrix}
	\end{array}
\end{equation}
Here, function $\theta(t)$ can be chosen arbitrarily and therefore acts as a control, while the sets $M_0$ and $M_1$ are selected depending on the type of problem and can be single points (in which case we obtain a point-to-point time minimization problem), but not necessarily.

\begin{remark}
\label{rm:2d_3d}
    The auxiliary system~\eqref{eq:auxiliary_system} has two significant advantages over the original system~\eqref{eq:main_quantum_system}. Firstly, the original system has 3D phase space $\R^3=\{\rr=(r_x,r_y,r_z)\}$, while the auxiliary system has 2D phase space $\R^2=\{(r_x,R)\}$. Secondly, 1-dimentional control $u$ in the original system forms a 1-demintional set of admissible velocities at any point (namely, a straight line), while 1-dimentional control $\theta$ in the auxiliary system forms boundary of a 2-dimentional set of admissible velocities. It is quite obvious that controlling a state by a two-dimensional set of admissible velocities in a two-dimensional phase space is much easier than controlling a state by a one-dimensional set of admissible velocities in a three-dimensional phase space.
\end{remark}

\begin{remark}
\label{rm:sing_R}
    It is worth noting that the phase space of the auxiliary problem is a half-disk defined by constraints $r_x^2+R^2\le 1$, $R\ge 0$. This fact is highly inconvenient because both of these constraints are in the phase variables. The constraint $r_x^2+R^2\le 1$ is automatically satisfied since $\frac{\mathrm{d}}{\mathrm{d}t}(r_x^2+R^2)\le 0$ for all points on the circle $r_x^2+R^2=1$, so it can be ignored. However, the constraint $R\ge 0$ is particularly inconvenient, since optimal control problems with phase constraints are generally much harder to analyze comparing to problems without such constraints. To eliminate the constraint $R\ge 0$, we can exploit the discrete symmetry in the auxiliary problem~(\ref{eq:auxiliary_system}):
    \[
    	R\mapsto-R;\quad \theta\mapsto\theta+\pi.
    \]
    Thus, we consider the phase space of the problem~(\ref{eq:auxiliary_system}) as the unit ball $r_x^2+R^2\le 1$, since any trajectory within the ball can be mapped to a trajectory within the semiball $R\ge0$ by applying the reflection $R\mapsto-R$, $\theta\mapsto\theta+\pi$ on those time intervals where $R(t)<0$, that is,
    \[
    	R(t)\mapsto |R(t)|;\quad
    	\theta(t)\mapsto \theta(t) + \frac\pi2(1-\sgn\,R(t))
    \]
    Therefore, the control $\theta(t)$ may have discontinuities at the time moments when $R(t)=0$. Discontinuities of $\theta(t)$ produce impulses in the original control $u(t)$, which is highly inconvenient and arise in some quantum control problems (see e.g.~\cite{Gurman}). However, a nontrivial result is that the optimal trajectory in the open qubit controlling problem~\eqref{eq:main_system_in_R_theta} has no such time moments on $(0;T)$ (see Corollary~\ref{cor:R_zeros}).

\end{remark}

\bigskip

The main convenience of the auxiliary problem~\eqref{eq:auxiliary_system} lies in its two-dimensionality, and the set of controls being essentially compact because the right-hand side of the system~\eqref{eq:auxiliary_system} is periodic with respect to $\theta$, meaning that $\theta\in\R/2\pi\mathbb{Z}$. Moreover, we will show that when the angle $\theta$ traverses the interval $[0;2\pi]$, the right-hand side of the system~\eqref{eq:auxiliary_system} (with fixed $r_x$ and $R$) moves along a closed curve that forms the boundary of a convex compact set. This property makes the problem~\eqref{eq:auxiliary_system} very convenient for investigation using geometric control theory methods.


\section{Obtaining a desired level of purity}

In addition to the point-to-point time minimization problem, an important problem is to obtain a desired level of state purity in minimum time. Obtaining exact solutions to this problem is the first step of the investigation of the original problem~\eqref{eq:master}, \eqref{problem:master}. Moreover, the problem is of great significance in quantum information and quantum computing since the state purity serves as a measure of accuracy and reliability of quantum operations and algorithms.

The problem, in terms of the auxiliary 2D system~\eqref{eq:auxiliary_system}, can be formulated as follows:
\begin{equation}
\label{problem:pureness}
	\begin{array}{c}
		T\to\inf\\[0.2cm]
		r_x^2(0)+R^2(0)=2P_0-1;\quad r_x^2(T)+R^2(T)=2P_1-1,
	\end{array}
\end{equation}
where $P_0$ and $P_1$ are computed from given levels of purity for the initial and final states, respectively. Finding the exact solution to this problem provides a lower bound on the optimal time of evolution in the original problem~\eqref{eq:master}, \eqref{problem:master}. Indeed, the optimal time of evolution from any initial state $\rho^0$ to any final state $\rho^1$ cannot be smaller than the optimal time of evolution in the problem~\eqref{problem:pureness} with the corresponding purity levels $P_0$ and $P_1$.

\begin{proposition}
\label{prop:pureness}

    Put $\sigma=\text{sgn}(P_1-P_0)=\pm1$. If $P_1<1$, then there exists an optimal solution to problem~\eqref{problem:pureness} for system~\eqref{eq:auxiliary_system}. On any optimal solution, $R(t)$ does not change its sign. Without loss of generality, we assume $R(t)\ge 0$, and in this case,
	\[
		r_x(t)\equiv 0,\quad
		R(t) = \mu_0e^{-\gamma t} + \sigma (1-e^{-\gamma t}),\quad
		\theta(t) \equiv \sigma\frac\pi2,
		\quad\text{and}\quad
		T=\frac1\gamma\ln\frac{1-\sigma \mu_0}{1-\sigma \mu_1}.
	\]
    where
    \[
        \mu_{0,1} = \sqrt{2P_{0,1}-1}.
    \]
    If $P_1=1$ and $P_0<1$, then there is no optimal solution because there is no control that can bring the system to the purity level of $P_1=1$ in finite time.
\end{proposition}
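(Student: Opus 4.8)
The plan is to reduce the two–dimensional problem to a scalar comparison problem for the Bloch radius $\mu(t)=\sqrt{r_x^2(t)+R^2(t)}$, which is the only quantity entering the boundary conditions of~\eqref{problem:pureness} (since $r_x^2+R^2=2P-1=\mu^2$). First I would differentiate $\mu^2$ along an arbitrary trajectory of the auxiliary system~\eqref{eq:auxiliary_system}. In the sum $r_x\dot r_x+R\dot R$ the two cross terms $\pm\omega\cos\theta\,r_xR$ cancel, and setting $s:=R\sin\theta$ one obtains the identity
\[
\tfrac12\frac{\D}{\D t}\mu^2=r_x\dot r_x+R\dot R=-\tfrac12\gamma\mu^2-\tfrac12\gamma(s-1)^2+\tfrac12\gamma,\qquad s:=R\sin\theta .
\]
The entire effect of the control $\theta$ and of the transverse coordinate $r_x$ is thereby packaged into the single scalar $s$, whose admissible range satisfies $|s|\le|R|\le\mu\le1$ (the disk constraint $r_x^2+R^2\le1$ being automatic by Remark~\ref{rm:sing_R}).

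Second, I would convert this identity into sharp two–sided differential inequalities involving $\mu$ alone. Taking the extreme admissible values $s=\pm|R|$ (the choice closest to, resp.\ farthest from, $1$), then using $|R|\le\mu\le1$ together with the monotonicity of $x\mapsto x(2-x)$ and $x\mapsto-x^2-2x$ on $[0,1]$ to bound $|R|$ by $\mu$, gives
\[
-\gamma\mu(1+\mu)\ \le\ \mu\dot\mu\ \le\ \gamma\mu(1-\mu),
\]
that is $\dot\mu\le\gamma(1-\mu)$ and $\dot\mu\ge-\gamma(1+\mu)$. Both bounds depend on $\mu$ only, and equality on the right (resp.\ left) forces $r_x=0$ together with $\sin\theta=+1$ (resp.\ $\sin\theta=-1$); this equality analysis is what will later pin down the optimal control.

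Third, I would read off the optimal time and the extremal trajectory. By the comparison principle, for $\sigma=+1$ the inequality $\dot\mu\le\gamma(1-\mu)$ yields $\mu(t)\le\bar\mu(t):=1-(1-\mu_0)e^{-\gamma t}$, so $\mu(T)=\mu_1$ forces $T\ge\frac1\gamma\ln\frac{1-\mu_0}{1-\mu_1}$; for $\sigma=-1$ the inequality $\dot\mu\ge-\gamma(1+\mu)$ gives $T\ge\frac1\gamma\ln\frac{1+\mu_0}{1+\mu_1}$, both cases being summarized by $T\ge\frac1\gamma\ln\frac{1-\sigma\mu_0}{1-\sigma\mu_1}$. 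To show this bound is attained I would exhibit the trajectory with $r_x\equiv0$ and $\theta\equiv\sigma\frac\pi2$: a direct substitution shows it stays on the axis and obeys $\dot R=\gamma(\sigma-R)$, whence $R(t)=\sigma+(\mu_0-\sigma)e^{-\gamma t}=\mu_0e^{-\gamma t}+\sigma(1-e^{-\gamma t})$, reaching $R=\mu_1$ exactly at the lower-bound time. This single construction establishes both existence and the stated value of $T$. For the structure claim I would run a Gronwall argument on $\phi:=\bar\mu-\mu\ge0$ (with the obvious analogue for $\sigma=-1$): from $\dot\phi\ge-\gamma\phi$ and $\phi(0)=\phi(T)=0$ one gets $\phi\equiv0$, so equality holds a.e.\ in the differential inequality; hence every optimal trajectory has $r_x\equiv0$ and $\sin\theta\equiv\sigma$, i.e.\ $\theta\equiv\sigma\frac\pi2$. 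Continuity of $(r_x,R)$ then shows $R$ keeps a constant sign, and the discrete symmetry $R\mapsto-R,\ \theta\mapsto\theta+\pi$ of Remark~\ref{rm:sing_R} lets me normalize $R\ge0$, giving the displayed formulas.

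Finally, for $P_1=1$ (so $\mu_1=1$, and necessarily $\sigma=+1$ since $P_0<1$) the same upper bound gives $\mu(t)\le1-(1-\mu_0)e^{-\gamma t}<1$ for every finite $t$; thus purity $1$ is unreachable in finite time, the problem is infeasible, and no optimal solution exists. I expect the genuine obstacle to be not the lower bound or the explicit construction (both direct), but the rigorous structure part: propagating equality from the endpoints to the whole interval and controlling the sign of $R$ near points where $R$ might vanish (in particular the degenerate start $\mu_0=0$, where $\mu$ is not smooth and the substitution leading to~\eqref{eq:auxiliary_system} is singular, cf.\ Remark~\ref{rm:R_zeros}).
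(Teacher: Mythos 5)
Your proposal is correct and follows essentially the same route as the paper: both differentiate $r_x^2+R^2$ to obtain the identity~\eqref{eq:pureness_derivative}, reduce to the scalar two-sided inequality $-\gamma(1+\mu)\le\dot\mu\le\gamma(1-\mu)$ via $|R\sin\theta|\le|R|\le\mu\le1$, characterize equality by $r_x=0$ and $\theta=\pm\frac\pi2$, and integrate the resulting ODE to get $T$. Your Gronwall argument propagating equality from the endpoints over the whole interval is a slightly more explicit rendering of a step the paper states briefly, but it is the same argument in substance.
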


The proof is provided in Appendix~\ref{sec:appendix_proof_prop:pureness}.


\section{Continuity and analyticity of the optimal control in the auxiliary problem~(\ref{eq:auxiliary_system})}

\begin{proposition}
\label{prop:auxiliary_existence}

    Let us assume that the terminal sets $M_0$ and $M_1$ within the disk ${r_x^2+R^2\le 1}$ are closed, and at least one point in $M_1$ is reachable from some point in $M_0$ using the auxiliary control system~\eqref{eq:auxiliary_system}. Then there exists an optimal control $\hat\theta(t)\in L_1(0;T)$ in the time minimization problem~\eqref{eq:auxiliary_system}. Moreover, any optimal control $\hat\theta(t)\in L_1(0;T)$ is in fact a continuous function of time on the segment $[0;T]$, $\hat\theta(\cdot)\in C[0;T]$. The corresponding optimal trajectory $(\hat r_x(t),\hat R(t))$ lies in $C^1[0;T]$, and all three functions $\hat r_x(t)$, $\hat R(t)$, and $\hat\theta(t)$ are analytic on the half-segment $(0;T]$. If additionally $|\hat R(0)|<1$, then all three functions are analytic for all $t\in[0;T]$. If at some moment $\tau$ we have $\hat R(\tau)=0$, then $\frac1\omega|\dot R(\tau)|=\sqrt{r_x^2(\tau)+\frac{\gamma^2}{\omega^2}}\ne0$, $\cos\theta(\tau)=\omega r_x(\tau)/\dot R(\tau)$, and $\sin\theta(\tau)=\gamma/\dot R(\tau)$.
\end{proposition}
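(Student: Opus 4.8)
The plan is to treat the auxiliary problem~\eqref{eq:auxiliary_system} as a time-optimal control problem with the \emph{compact} control manifold $\theta\in\R/2\pi\mathbb Z$ and to lean on the convexity property announced at the end of Section~\ref{sec:auxiliary_problem}. Write the right-hand side as $f(r_x,R,\theta)$ and let $F(r_x,R)=\{f(r_x,R,\theta):\theta\in\R/2\pi\mathbb Z\}$. By that property $F$ is the boundary of a convex compact set $K(r_x,R)$, so $\operatorname{co}F=K$ is convex, compact and continuous in $(r_x,R)$; since the disk $r_x^2+R^2\le1$ is compact and invariant and $f$ is bounded on it, Filippov's existence theorem applied to the convexified (relaxed) system yields an optimal relaxed trajectory, the admissible set being nonempty by the reachability hypothesis. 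To upgrade this to a genuine $\theta$-control I would invoke the maximum principle for the relaxed problem: the optimal relaxed control $\nu_t$ is supported on the face of $K(\hat r_x(t),\hat R(t))$ exposed by the costate $\psi(t)$, and wherever $K$ is strictly convex this face is a single extreme point, so $\nu_t=\delta_{\hat\theta(t)}$ is pure.

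Next I would extract regularity from the maximum principle. The costate cannot vanish: with the control fixed, $\psi$ obeys the linear adjoint equation $\dot\psi=-\bigl(\partial f/\partial(r_x,R)\bigr)^{\top}\psi$, so $\psi(t_0)=0$ would give $\psi\equiv0$; then the maximized Hamiltonian $\max_\theta\langle\psi,f\rangle\equiv0$ would equal the abnormal multiplier $\lambda_0$, forcing $(\lambda_0,\psi)\equiv0$ and contradicting nontriviality. Hence $\psi(t)\neq0$ on $[0;T]$. At any point with $R\neq0$ the curve $\theta\mapsto f$ is strictly convex, so $\langle\psi,f(\cdot,\theta)\rangle$ has a unique maximizer $\hat\theta$ characterized by $\partial_\theta\langle\psi,f\rangle=0$, $\partial_\theta^2\langle\psi,f\rangle<0$; since $f$ is trigonometric, hence real-analytic, in $\theta$ and $\psi\neq0$, the implicit function theorem gives an analytic feedback $\hat\theta=\Theta(r_x,R,\psi)$. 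Substituting it into the Hamiltonian system makes $(r_x,R,\psi)$ solve an autonomous real-analytic ODE, so on every subinterval where $R\neq0$ the functions $\hat r_x,\hat R,\psi,\hat\theta$ are real-analytic and $(\hat r_x,\hat R)\in C^1$.

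It remains to analyze the two loci where strict convexity of $K$ fails. On the bounding circle $r_x^2+R^2=1$ the field points inward, $\tfrac{\D}{\D t}(r_x^2+R^2)\le0$, and at $(0,\pm1)$ it degenerates to the north-pole equilibrium $\theta=\pm\tfrac\pi2$; starting on this circle (that is, $|\hat R(0)|=1$) can produce a square-root-type singularity at $t=0$, which is precisely why analyticity is claimed only on $(0;T]$ in general and on all of $[0;T]$ once $|\hat R(0)|<1$. The sharper degeneracy is $R=0$: there the $\dot r_x$-component of $f$ drops its $\theta$-dependence and $K$ collapses to a segment parallel to the $\dot R$-axis. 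At such a moment $\tau$ the maximum condition reduces to maximizing $\psi_2\,(r_x\cos\theta+\tfrac\gamma\omega\sin\theta)$, and provided $\psi_2(\tau)\neq0$ this aligns $(\cos\theta,\sin\theta)$ with $(\omega r_x,\gamma)$, giving $\cos\theta=\omega r_x/\dot R$, $\sin\theta=\gamma/\dot R$ and $\tfrac1\omega|\dot R|=\sqrt{r_x^2+\gamma^2/\omega^2}>0$, i.e.\ a transversal crossing.

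I expect the set $\{R=0\}$ to be the main obstacle. One must show that $\psi_2(\tau)\neq0$ whenever $\hat R(\tau)=0$, so that the maximizer is well defined and the crossing transversal, and then that $\hat\theta$ extends continuously --- indeed analytically --- across such isolated zeros (isolated because $\hat R$ is analytic and not identically zero). This gluing, so that the analytic arcs of the previous step combine into a single analytic curve on $(0;T]$ and a continuous control on $[0;T]$, together with the delicate boundary behaviour near the Bloch sphere at $t=0$, is where I expect the real work to lie.
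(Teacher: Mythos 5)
Your overall architecture matches the paper's: Filippov existence for the convexified system, reduction back to boundary controls via the maximum principle, nonvanishing of the costate from linearity of the adjoint equation, and the implicit function theorem to get an analytic feedback $\hat\theta=\Theta(r_x,R,\psi)$ wherever the velocity set is strictly convex. (One refinement the paper adds that you skip: it verifies via the curvature computation $\xi'_\theta\eta-\xi\eta'_\theta=\frac{\gamma}{\omega}R(1-R\sin^3\theta)$ that $U(r_x,R)$ is already convex with $\partial U$ traversed monotonically, so the relaxed optimal control is automatically pure; and it handles $|\hat R(0)|=1$ not by asymptotic analysis of a square-root singularity but by an invariance lemma showing $r_x^2+R^2<1$ propagates forward, so the purity circle can only be touched at $t=0$.)

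The genuine gap is the point you yourself flag and then leave open: proving that $R(\tau)=0$ forces $q(\tau)\ne 0$ (in your notation $\psi_2(\tau)\ne0$) on every extremal. Everything downstream --- the transversal crossing $\dot R(\tau)\ne0$, the formulas for $\cos\theta(\tau)$ and $\sin\theta(\tau)$, the continuity of $\hat\theta$ on $[0;T]$, and the gluing of analytic arcs across zeros of $\hat R$ into a single analytic curve on $(0;T]$ --- depends on it, so without it the proposition is not proved. This is also by far the hardest step in the paper's proof: one first shows $\{t:R=q=0\}$ has measure zero (a Lebesgue-point argument yields $0=\cos\theta=r_x\cos\theta+\frac{\gamma}{\omega}\sin\theta$ a.e.\ on that set, impossible since $\gamma\ne0$), and then excludes isolated common zeros by a blow-up $R=\lambda\cos\varphi$, $q=\lambda\sin\varphi$: near such a zero one derives $\dot\varphi\le -\frac{1}{ct}$, so $\varphi\to+\infty$, while a comparison estimate $\frac{d\nu}{d\varphi}\ge -c\nu^2$ for $\nu=\lambda S(\theta_S)$ gives $\lambda\ge\frac{1}{c\varphi}$ and hence $\dot\varphi\le c\varphi$, contradicting the blow-up of $\varphi$. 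Nothing in your proposal suggests a mechanism for this exclusion, and it does not follow from the nondegeneracy of the costate alone (the adjoint argument rules out $\psi\equiv0$, not $\psi_2(\tau)=0$ at a single instant where $R(\tau)=0$). Until this lemma is supplied, your argument establishes analyticity only on the complement of the zero set of $\hat R$, not the full statement.
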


The proof is provided in Appendix~\ref{sec:appendix_proof_prop:auxiliary_existence}.


\section{Absence of Impulses}

Next, we investigate the set of intersections of the optimal trajectory with the line $R=0$. For this purpose, let us denote
\[
    Z=\{0<\tau<T:R(\tau)=0\}.
\]
The structure of the set $Z$ is of great importance, e.g., since the control $u(t)$ has impulses at $t\in Z$ (see Remarks~\ref{rm:R_zeros} and~\ref{rm:sing_R}). From Proposition~\ref{prop:auxiliary_existence}, it follows, due to the analyticity of $R(t)$, that the number of elements in $Z$ is finite, $\#Z<\infty$. In fact, this result can be significantly improved.

\begin{corollary}
\label{cor:R_zeros}
    On any optimal trajectory in problem~\eqref{eq:auxiliary_system}, it holds that $\#Z\leq 1$. Moreover, if $\#Z=1$ and $Z=\{\tau\}$, then $R(t)<0$ for $t\in[0;\tau)$ and $R(t)>0$ for $t\in(\tau;T]$, or vice versa.
\end{corollary}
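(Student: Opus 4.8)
The plan is to leverage the analyticity established in Proposition~\ref{prop:auxiliary_existence} together with the special geometric structure of the crossings of the line $R=0$. We already know $\#Z<\infty$, so suppose for contradiction that $Z$ contains two (or more) crossing times $\tau_1<\tau_2$. The first key observation, which I would establish at the outset, is that near each crossing the optimal trajectory behaves in a rigid way: Proposition~\ref{prop:auxiliary_existence} tells us that whenever $\hat R(\tau)=0$ we have the \emph{nonvanishing} transverse velocity $\tfrac1\omega|\dot R(\tau)|=\sqrt{r_x^2(\tau)+\gamma^2/\omega^2}\neq 0$, together with the pinned control values $\cos\theta(\tau)=\omega r_x(\tau)/\dot R(\tau)$ and $\sin\theta(\tau)=\gamma/\dot R(\tau)$. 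Thus the trajectory crosses $R=0$ \emph{transversally}, so each crossing is a genuine sign change of $\hat R$, not a tangency. This already forces the alternating picture: between $\tau_1$ and $\tau_2$ the sign of $\hat R$ is constant and opposite on the two sides.

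First I would use the discrete symmetry $R\mapsto -R$, $\theta\mapsto\theta+\pi$ from Remark~\ref{rm:sing_R} to fold all the sub-arcs where $\hat R<0$ up into the region $R\ge 0$. Because this reflection is an exact symmetry of the auxiliary dynamics~\eqref{eq:auxiliary_system} and leaves the time functional unchanged, the folded curve is an admissible trajectory of the \emph{same} total duration $T$ joining (folded images of) $M_0$ and $M_1$. The effect of folding is to replace each transversal crossing of $R=0$ by a point where the folded trajectory \emph{touches} the boundary line $R=0$ from above, i.e. a point where $\hat R=0$ with the reflected curve lying in $R\ge0$ on both sides — a corner/tangency of the folded curve sitting on the wall $R=0$. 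The strategy is then to show such a configuration cannot be optimal: one can locally round the corner (cut across near the touching point, staying in $R\ge 0$) and strictly reduce the time, contradicting optimality. This is essentially a boundary-reflection / cut-corner argument exploiting the convexity of the admissible velocity set noted at the end of Section~\ref{sec:auxiliary_problem}.

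The most efficient route, and the one I would actually carry out, replaces the local surgery by an analyticity argument that directly rules out two crossings. After folding, the trajectory $(\hat r_x,\hat R)$ is optimal for the problem on the full disk, hence by Proposition~\ref{prop:auxiliary_existence} it is analytic on $(0;T]$; but $\hat R$ now attains an interior minimum value $0$ at the (folded) crossing time, so at that time $\dot{\hat R}=0$. This contradicts the nonvanishing-derivative conclusion $|\dot{\hat R}(\tau)|\neq0$ from Proposition~\ref{prop:auxiliary_existence}, \emph{unless} no interior folded crossing exists. Concretely: if there were two unfolded crossings $\tau_1<\tau_2$ with $\hat R<0$ on $(\tau_1,\tau_2)$, the folded $|\hat R|$ would vanish at both $\tau_1,\tau_2$ while being positive on an interior point between them, forcing a zero of the derivative of the analytic folded radial coordinate somewhere, again in conflict with transversality. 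Hence at most one crossing survives, $\#Z\le 1$.

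Finally, the sign statement follows immediately from transversality: if $\#Z=1$ with $Z=\{\tau\}$, then $\hat R$ changes sign at $\tau$ (it cannot merely touch, since $\dot{\hat R}(\tau)\neq 0$), and since $\hat R$ has no other zeros on $(0;T)$ it is single-signed on each of $[0;\tau)$ and $(\tau;T]$ with opposite signs — that is, negative then positive, or vice versa. The main obstacle I anticipate is making the folding/cut-corner reduction fully rigorous at the boundary, in particular verifying that the rounded trajectory remains admissible and strictly faster; the convexity of the velocity indicatrix and the transversality from Proposition~\ref{prop:auxiliary_existence} are exactly the tools that should close this gap, so I would invest the bulk of the write-up there.
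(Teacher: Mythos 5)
Your core mechanism is the same as the paper's: reflect the arc of the trajectory between two zeros of $\hat R$ via the symmetry $R\mapsto-R$, $\theta\mapsto\theta+\pi$, note that the composite curve is admissible with the same endpoints and the same duration (hence optimal), and derive a contradiction between the corner this creates at an interior zero and the regularity plus $\dot{\hat R}(\tau)\ne0$ guaranteed by Proposition~\ref{prop:auxiliary_existence}. Two cautions on execution. First, folding \emph{all} negative arcs can move the endpoints out of $M_0$ or $M_1$ (these sets need not be symmetric under $R\mapsto -R$), in which case the folded curve is not admissible for the original problem and Proposition~\ref{prop:auxiliary_existence} cannot be applied to it; the fold must be performed only between two zeros, as your ``concrete'' version implicitly does. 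Second, the phrase ``forcing a zero of the derivative of the analytic folded radial coordinate somewhere'' reads like Rolle's theorem on $(\tau_1,\tau_2)$, which proves nothing: the derivative of $R$ is perfectly allowed to vanish where $R\ne0$. The actual contradiction sits at $\tau_1$ (or $\tau_2$) itself, where the folded $R$ has a corner, so the $C^1$ regularity of an optimal trajectory forces $\dot R(\tau_1)=0$ exactly where Proposition~\ref{prop:auxiliary_existence} forbids it; you do state this correctly earlier in the same paragraph, so keep that version and drop the Rolle-flavoured sentence.

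The genuine gap is in the second assertion. Since $Z=\{0<\tau<T: R(\tau)=0\}$ excludes the endpoints, $\#Z=1$ is a priori compatible with $R(0)=0$ or $R(T)=0$, and your ``follows immediately from transversality'' only yields a constant sign on the \emph{open} intervals $(0,\tau)$ and $(\tau,T)$; it does not establish $R(0)\ne0$ and $R(T)\ne0$, which the statement requires (the claim $R<0$ on $[0;\tau)$ includes $t=0$). The paper closes this by running the same reflection argument on the pair of zeros $\{0,\tau\}$ or $\{\tau,T\}$: the fold between them preserves both endpoints of the whole trajectory (because $R$ vanishes at both fold times), the composite is again optimal, and the corner now sits at the \emph{interior} zero $\tau$, giving the same contradiction as before. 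You need to add this case; without it the stated dichotomy on $[0;\tau)$ and $(\tau;T]$ is not proved.
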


Thus, the optimal trajectory for $0<t<T$ intersects the diameter $R=0$ at most once. By using the fact that $R(0)\geq 0$ and $R(T)\geq 0$, it can be shown that such intersections do not occur at all (a detailed explanation is provided in the proof of Theorem~\ref{thm:times_original_auxiliary_equal}).

\begin{proof}[Proof of Corollary~\ref{cor:R_zeros}]

    The essence of the statement is that the second zero of the function $R(t)$ will always be a Maxwell point, after which the trajectory cannot be optimal due to analyticity. Below is a detailed proof. 
    
    Let's consider an arbitrary optimal trajectory $(r_x(t), R(t), \theta(t))$ in the problem~\eqref{eq:auxiliary_system}. Now, let's examine the reflected trajectory $(r_x(t), -R(t), \theta(t) + \pi)$. The reflected trajectory satisfies the system of differential equations in problem~\eqref{eq:auxiliary_system} because the symmetry $R \mapsto -R$, $\theta \mapsto \theta + \pi$ preserves the system.

    Sets $M_0$ and $M_1$ may not be symmetric with respect to the reflection $R \to -R$, so the initial and final points of the reflected trajectory may not lie in $M_0$ and $M_1$, respectively. However, the reflected trajectory is useful in constructing a composite nonsmooth but optimal trajectory when the function $R(t)$ has multiple zeros.

    Note that if $\#Z=1$, $Z=\{\tau\}$, then on each of the intervals $(0;\tau)$ and $(\tau;T)$, the continuous function $R(t)$ has constant but different signs, since $\dot R(\tau)\ne 0$ according to Proposition~\ref{prop:auxiliary_existence}. Therefore, we need to prove that either $\#Z=0$ or $\#Z=1$, but $R(0)\ne 0$ and $R(T)\ne 0$.

    Let us proof the Corollary by contradiction. Suppose that $\#Z\ge 2$ or $\#Z=1$, but $R(0)=0$ or $R(T)=0$. Then there exist $0\le\tau_0<\tau_1\le T$ such that $R(\tau_{0,1})=0$, and at least one of the following conditions holds: $\tau_0>0$ or $\tau_1<T$. Both of these cases are similar, so let's assume that $\tau_1<T$.

    If in problem~\eqref{eq:auxiliary_system} there exist two distinct optimal trajectories that pass through the same point at a particular time $\tau$, then such a time is called a Maxwell point. In the analytic case, a Maxwell point usually contradicts optimality. Indeed, consider a composite trajectory $(r_x(t),\tilde R(t),\tilde \theta(t))$, where
    \[
        \tilde R(t) =
        \begin{cases}
            R(t),&\text{for }t\notin[\tau_0;\tau_1];\\
            -R(t),&\text{for }t\in[\tau_0;\tau_1];
        \end{cases}
        \qquad
        \tilde\theta(t) =
        \begin{cases}
            \theta(t),&\text{for }t\notin[\tau_0;\tau_1];\\
            \theta(t)+\pi,&\text{for }t\in[\tau_0;\tau_1];
        \end{cases}
    \]
    The function $\tilde R(t)$ is Lipschitz continuous since $R(\tau_{0,1})=-R(\tau_{0,1})=0$, and the control $\tilde\theta(t)$ is obviously measurable. Therefore, the composite trajectory $(r_x(t),\tilde R(t),\tilde\theta(t))$ is admissible in problem~\eqref{eq:auxiliary_system}. It is clear that $R(0)=\tilde R(0)$ and $R(T)=\tilde R(T)$. Hence, the composite trajectory starts at a point in the set $M_0$ and ends at a point in the set $M_1$. The time of motion along this trajectory coincides with the time $T$ of motion along the original optimal trajectory, so the composite trajectory is also optimal.

    Since $0 < \tau_1 < T$, we have $\dot{\tilde R}(\tau_1-0)=-\dot R(\tau_1)=-\dot{\tilde R}(\tau_1+0)$. However, Proposition~\ref{prop:auxiliary_existence} states that function $\tilde R(t)$ must be analytic due to optimality. Therefore, $\dot{\tilde R}(\tau_1)=0$, which contradicts the last statement in Proposition~\ref{prop:auxiliary_existence}.
\end{proof}


\section{Reconstruction of the Optimal Control from the Auxiliary Problem}

The optimal solutions to the auxiliary problem \eqref{eq:auxiliary_system} completely determine the optimal solutions to the original problem. This is quite usefull since the investigating problem \eqref{eq:auxiliary_system} is much simpler.

\begin{theorem}
\label{thm:times_original_auxiliary_equal}

    The minimum time of motion from the state $\mathbf{r}^0=(r_x^0,r_y^0,r_z^0)$ to the state $\mathbf{r}^1=(r_x^1,r_y^1,r_z^1)$ for the quantum control system \eqref{eq:main_quantum_system} using impulse control $u(t)$ coincides with the minimum time of motion from the state $(r_x^0,R^0)$ to the state $(r_x^1,R^1)$ in the auxiliary problem \eqref{eq:auxiliary_system}, where $R^j=((r_y^j)^2 + (r_z^j)^2)^{1/2}$ for $j=0,1$. Furthermore, the optimal control $\hat u(t)$ in the problem \eqref{eq:main_quantum_system} exists and can be recovered from the optimal control $\hat\theta(t)$ in the auxiliary problem \eqref{eq:auxiliary_system} using the following formula:	
    \begin{multline}
	\label{eq:u_from_theta}
        \hat u(t) = \frac{1}{\kappa}\Bigg[
            \dot{\hat\theta}(t) +
            \omega\left(
                \frac14\frac\gamma\omega\sin2\hat\theta(t) +
                \frac{1}{\hat R(t)}\Big(\hat r_x(t)\sin\hat\theta(t) -
                \frac\gamma\omega \cos\hat\theta(t)\Big)
            \right)+\\
            +(\hat\theta(0)-\theta_0)\delta_0(t) +
            (\theta_1-\hat\theta(T))\delta_T(t)
        \Bigg],
	\end{multline}

    \noindent where $\delta_\tau(\cdot)$ is the Dirac delta function at the time $\tau$, $\theta_0=\theta(0)$ and $\theta_1=\theta(T)$ are the initial and final states of the angle $\theta$ in the original problem.
\end{theorem}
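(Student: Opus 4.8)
The plan is to establish the equality of the two minimal times by proving two opposite inequalities, and simultaneously to verify that the control produced by formula~\eqref{eq:u_from_theta} is an admissible impulse control realizing the auxiliary optimum. The structural fact that makes everything work is that the $(r_x,R)$-subsystem of~\eqref{eq:main_system_in_R_theta} does not contain the control $u$ at all: it is driven solely by the angle $\theta(t)$ and coincides, after dividing by $\omega$, with the auxiliary system~\eqref{eq:auxiliary_system}. Thus the two problems share the same planar dynamics; they differ only in that the original problem additionally tracks the phase $\theta$ and fixes its endpoint values $\theta_0,\theta_1$ determined by $\mathbf r^0,\mathbf r^1$ through $r_y^j=R^j\cos\theta_j$, $r_z^j=R^j\sin\theta_j$.

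First I would prove that the auxiliary minimal time does not exceed the original one over impulse controls. Given any impulse control $u(t)$ steering $\mathbf r^0$ to $\mathbf r^1$ in time $T$, the induced phase $\theta(t)$ is a measurable function (with possible jumps at the impulse instants), hence an admissible control in~\eqref{eq:auxiliary_system}; the resulting planar component $(r_x(t),R(t))$ solves the auxiliary system and joins $(r_x^0,R^0)$ to $(r_x^1,R^1)$ in the same time $T$. This yields $T_{\mathrm{aux}}\le T_{\mathrm{orig}}$.

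For the reverse inequality I would lift the optimal auxiliary solution. By Proposition~\ref{prop:auxiliary_existence} an optimal $\hat\theta(t)$ exists, is continuous on $[0;T]$ and analytic on $(0;T]$, and by Corollary~\ref{cor:R_zeros} the curve $\hat R(t)$ keeps a constant sign on $(0;T)$, so the factor $1/\hat R(t)$ in~\eqref{eq:u_from_theta} is regular on the open interval. The smooth part of~\eqref{eq:u_from_theta} is exactly the inverse of the substitution~\eqref{eq:u_v_via_theta} with $v=\dot{\hat\theta}/\omega$: plugging it into the $\dot\theta$-equation of~\eqref{eq:main_system_in_R_theta} reproduces $\dot\theta=\dot{\hat\theta}$, so the lifted trajectory has $\theta(t)=\hat\theta(t)$ and therefore $(r_x(t),R(t))=(\hat r_x(t),\hat R(t))$ on $(0;T)$. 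It then remains to reconcile the endpoint phases. The control field $\kappa f_1$ generates rotation about the $Ox$ axis, which preserves $r_x$ and $R=\sqrt{r_y^2+r_z^2}$ and changes only $\theta$; an impulse of weight $a$ at an instant $\tau$ rotates the phase by $\kappa a$. Hence the term $\frac1\kappa(\hat\theta(0)-\theta_0)\delta_0$ instantaneously moves $\theta$ from the prescribed initial value $\theta_0$ to $\hat\theta(0)$ while leaving $(r_x^0,R^0)$ fixed, and the term $\frac1\kappa(\theta_1-\hat\theta(T))\delta_T$ moves $\theta$ from $\hat\theta(T)$ to $\theta_1$ at the end. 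The composite impulse control therefore steers the full state $\mathbf r^0$ to $\mathbf r^1$ in the same time $T$, giving $T_{\mathrm{orig}}\le T_{\mathrm{aux}}$ and, since the bound is attained, existence of the optimal $\hat u$.

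The main obstacle is to make the impulse calculus rigorous --- in particular to justify that an impulse acts as a pure $Ox$-rotation. I would realize the Dirac impulse as a limit of tall narrow controls $u$ supported on shrinking intervals and show that, as the width tends to zero, the bounded drift $\omega f_0$ contributes nothing to the net displacement while the control field $\kappa f_1 u$ integrates to the prescribed rotation angle; equivalently, one reparametrizes time so that the drift is frozen during the impulse and the state flows along the complete field $\kappa f_1$. A secondary technical point is the behaviour at $R=0$: if $R^0=0$ or $R^1=0$ the corresponding endpoint phase is undefined and the matching impulse is simply omitted, while the possibility $\hat R(0)=0$ (controlled through the last assertion of Proposition~\ref{prop:auxiliary_existence} and Remark~\ref{rm:sing_R}) must be checked so that the singular factor $1/\hat R$ in~\eqref{eq:u_from_theta} produces no genuine blow-up of the reconstructed smooth control near the endpoints.
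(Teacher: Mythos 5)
Your proposal is correct and follows essentially the same route as the paper: the same two inequalities, the same lift of the optimal auxiliary control via Proposition~\ref{prop:auxiliary_existence}, the same endpoint impulses matching $\theta_0\mapsto\hat\theta(0)$ and $\hat\theta(T)\mapsto\theta_1$, the same use of Corollary~\ref{cor:R_zeros} together with $R^0,R^1\ge0$ to keep $\hat R$ away from zero on $(0;T)$, and the same removable-singularity treatment of the $1/\hat R$ factor at the endpoints via the last assertion of that proposition. The only step you gloss over is in the easy direction: one must verify that $\{t:R(t)=0\}$ has measure zero so that the induced $\theta(t)$ is well defined a.e.\ and lies in $L_\infty$ (the paper does this by noting that on a positive-measure zero set one would get $0=\dot r_z=\gamma/\omega$ at Lebesgue points, contradicting $\gamma\ne0$).
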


Thus, the optimal control $u(t)$ is a sum of two impulses (Dirac delta functions) at the initial and final time moments and some function that is analytic on the half-open interval $(0, T]$. At $t=0$, this function may not be analytic, but in that case, it has an integrable singularity as $t\to +0$. Moreover, if $r_y^2(0)+r_z^2(0)\neq 1$, then this function is analytic over the entire interval $[0, T]$.

\begin{proof}[Proof of Theorem~\ref{thm:times_original_auxiliary_equal}]

    It is easy to see that the minimum time of motion in the auxiliary problem \eqref{eq:auxiliary_system} is not greater than the time of motion in the original problem \eqref{eq:main_quantum_system}. Indeed, for any control $u(t)\in L_1$, we determine the sulution $\rr(t)$ of the system \eqref{eq:general_control_system}. Using $\rr(t)$ we can easily find $R(t)=\sqrt{r_y(t)^2 + r_z(t)^2}$, but the angle $\theta(t)$, can not be determined in general. Indeed, we seek for a number $\theta(t)$, such that $r_y(t)=R(t)\cos\theta(t)$ and $r_z(t)=R(t)\sin\theta(t)$. So if $R(t)=0$, then $\theta(t)$ is not unique. Nonetheless, we claim the $R(t)=0$ on a set of zero measure. Indeed, if $S=\{t:R(t)=0\}$ has positive measure, then $r_y(t)=r_z(t)=0$ for $t\in S$ and $\dot r_y(t)=\dot r_z(t)=0$ for a.e.\ $t\in S$ since a.e.\ point in $S$ is a Lebesgue point. Hence for a.e.\ $t\in S$ we have $0=\dot r_z(t)=\gamma/\omega$ that contradicts $\gamma\ne0$. So $S$ has zero measure, and it is closed as $R(t)$ is a continuous function. Hence on an open dence set in $[0;T]$, function $\theta(t)$ is a nice continuous and bounded function. Summarizing, $\theta(t)$ is bounded and maesurable, so $\theta(\cdot)\in L_\infty(0;T)$. Therefore, we can use the obtained function $\theta(t)$ as a control in the auxiliary system \eqref{eq:auxiliary_system}, where we put $M_0=\{(r_x^0,R^0)\}$ and $M_1=\{(r_x^1,R^1)\}$. Consequently, the minimum time of motion in the auxiliary problem cannot be greater than that in the original problem.

    Now let's try to solve the inverse problem and recover the control $u(t)$ in the original problem based on the control $\theta(t)$ in the auxiliary problem. This problem cannot be solved within the class of measurable controls $u(t)$ because if $\theta(t)$ is some measurable function, then $\dot\theta(t)$ is no longer a measurable function but rather a generalized function from the excessively broad class $H^{-1}(0;T)$.

    The situation is saved by the fact that according to Proposition \ref{prop:auxiliary_existence}, in the auxiliary problem, the optimal control $\hat\theta(t)$ exists as long as the initial and final points can be connected by at least one admissible curve. Therefore, it is sufficient to recover the control $\hat u(t)$ only from the optimal control $\hat\theta(t)$ in the auxiliary problem, which can be done since the optimal control $\hat\theta(t)$ must be continuous on $[0,T]$ and analytic on $(0,T]$ according to Proposition \ref{prop:auxiliary_existence}.

    So, let $\hat\theta(t)$ be an optimal control in the auxiliary problem \eqref{eq:auxiliary_system}. To construct the control $\hat u(t)$ from $\hat\theta(t)$, we need to overcome two difficulties:
    
	\begin{itemize}
		
		\item In general, when transitioning to the auxiliary problem, the information about the initial and final angles $\theta^0$ and $\theta^1$ is lost. Therefore, it is possible that the optimal control $\hat\theta(t)$ (for the auxiliary problem \eqref{eq:auxiliary_system}) does not satisfy the equalities $\hat\theta(0)=\theta^0$ and $\hat\theta(T)=\theta^1$. To address this issue, we introduce impulses at the initial and final time moments, which transform $\theta^0\mapsto\hat\theta(0)$ and $\hat\theta(T)\mapsto\theta^1$.
		
		\item Since $R^0$ and $R^1$ have the same sign (both numbers are non-negative), on the optimal trajectories of the auxiliary problem \eqref{eq:auxiliary_system}, the variable $\hat R(t)$ cannot become zero for $t\in(0;T)$. Indeed, if $R^0>0$ and $R^1>0$, the number of zeros of the function $\hat R(t)$ must be even (since the derivative $\dot{\hat R}$ does not vanish at such points, according to Proposition \ref{prop:auxiliary_existence}), hence $\#Z=0$ by Corollary \ref{cor:R_zeros}, and so $R(t)>0$. If either $R^0=0$ or $R^1=0$, again we have $\#Z=0$ according to Corollary \ref{cor:R_zeros}, and again $\hat R(t)>0$. If $R^0=R^1=0$, then $\#Z=0$, but it may happen that $R(t)<0$ for $0<t<T$ in this case.  Nonetheless, the reflection $R\mapsto -R$, $\theta\to\theta+\pi$ does not change formula \eqref{eq:u_from_theta}. Therefore, even if $R^0=R^1=0$ and $\hat R(t)<0$ for $t\in(0;T)$, we can consider $\tilde R(t)=-\hat R(t)>0$  and $\tilde\theta(t)=\hat\theta(t)+\pi$ instead of $\hat R(t)$ and $\hat\theta(t)$. Thereby we assume that $\hat R(t)>0$ without loss of generality.
  
  	\end{itemize}	

    So, the control $\hat v(t)$ according to~\eqref{eq:dot_theta_v} is given by the following formula:
	\[
		\hat v(t) = \frac1\omega\left(
			\dot{\hat\theta}(t) + 
			(\hat\theta(0)-\theta_0)\delta_0(t) +
			(\theta_1-\hat\theta(T))\delta_T(t)
		\right).
	\]
    According to Proposition \ref{prop:auxiliary_existence}, function $\dot{\hat\theta}(t)$ is analytic on $(0;T]$. Moreover, if $R^0\neq 1$, then function $\dot\theta(t)$ is analytic on $[0;T]$, and if $R^0=1$, then function $\dot{\hat\theta}(t)$ may not be analytic at $t=0$, but it has an integrable singularity at that point since function $\hat\theta(t)$ is continuous at $t=0$.

    The original optimal control $\hat u(t)$ is determined by formula~\eqref{eq:u_v_via_theta}. The corresponding term contains a factor involving $\hat R(t)$ in the denominator, thus its analyticity at the points $\tau=0,T$ needs to be further investigated if $R(\tau)=0$. Let us assume $R(0)=0$ (the case $R(T)=0$ is similar). It is necessary to investigate the behavior of the function
    \[
        g(t)=\frac1{\hat R}\left(\hat r_x\sin\hat\theta - \frac\gamma\omega \cos\hat\theta\right)
    \]
    in a vicinity of $t=0$. Since $\hat R(0)\ne \pm1$, the function $R(t)$ is analytic according to Proposition~\ref{prop:auxiliary_existence}. By the same proposition, we have $\dot{\hat R}(0)\ne 0$, $\cos\hat\theta(0) = \omega \hat r_x(0)/\dot{\hat R}(0)$, and $\sin\hat\theta(0)=\gamma/\dot{\hat R}(0)$. Therefore, the function $g(t)$ has a removable singularity at $t=0$ and is analytic at $t=0$ by the Riemann theorem.

    Thus, the minimum time of motion in the original qubit control problem~\eqref{eq:main_quantum_system} is not less than the minimum time of motion in the auxiliary problem~\eqref{eq:auxiliary_system}, and therefore, they coincide.
\end{proof}


\section{Estimation of the minimum motion time in the state-to-state problem}

\begin{theorem}
\label{thm:time_estimates}
    Let's denote by $T$ the minimum time of motion for the two-point problem $\rho(0)=\rho^0$, $\rho(T)=\rho^1$ for the qubit control system~\eqref{eq:main_quantum_system}, and by $\rr^j=(r_x^j,r_y^j,r_z^j)$, $j=0,1$ the coordinates of the initial and final states $\rho^0$ and $\rho^1$ in the Bloch sphere. If $|\rr^1|<1$, then
    \[
		T\ge \frac1\gamma\ln\frac{1-\sigma |\rr^0|}{1-\sigma |\rr^1|},
	\]
	where $\sigma=\sgn\,(|\rr^1|-|\rr^0|)=\pm1$. If additionall we have $|\rr^1|\le 1- \frac\pi2\frac\gamma\omega$, then
	\[
		T\le \frac{\pi}{\omega} + \frac{e}\gamma +
		 \frac1\gamma\ln\frac{1-\sigma |\rr^0|}{1-\sigma |\rr^1|},
	\]
    However, if $|\rr^0|<1$ and $|\rr^1|=1$, then there is no optimal solution because there exists no control that can drive the system to the purity level $|\rr^1|=1$ in finite time.

\end{theorem}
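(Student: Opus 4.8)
The plan is to transport both inequalities to the auxiliary two–dimensional system~\eqref{eq:auxiliary_system}. By Theorem~\ref{thm:times_original_auxiliary_equal} the minimal state-to-state time for~\eqref{eq:main_quantum_system} equals the minimal time of~\eqref{eq:auxiliary_system} with the single-point targets $M_0=\{(r_x^0,R^0)\}$ and $M_1=\{(r_x^1,R^1)\}$, where $R^j=((r_y^j)^2+(r_z^j)^2)^{1/2}$. The lower bound and the non-existence statement will be read off from the scalar dynamics of the Bloch radius $\mu(t)=|\rr(t)|$, while the upper bound will be produced by an explicit control whose duration I then estimate.

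For the lower bound I first note that both the free $\omega$-rotation and the control field $f_1$ in~\eqref{eq:general_control_system} are given by antisymmetric matrices, so they drop out of $\frac{\D}{\D t}|\rr|^2=2\rr\cdot\dot\rr$; only the dissipator contributes, giving the control-independent identity
\[
  \frac{\D}{\D t}|\rr|^2=\gamma\bigl(1-|\rr|^2-(r_z-1)^2\bigr).
\]
Since $-\mu\le r_z\le\mu$, replacing $(r_z-1)^2$ by its extreme values $(1-\mu)^2$ (at $r_z=\mu$) and $(1+\mu)^2$ (at $r_z=-\mu$) yields the universal bounds $-\gamma(1+\mu)\le\dot\mu\le\gamma(1-\mu)$. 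Integrating $\dot\mu\le\gamma(1-\mu)$ when $\sigma=+1$ gives $\ln\frac{1-\mu_0}{1-\mu_1}\le\gamma T$, and integrating $\dot\mu\ge-\gamma(1+\mu)$ when $\sigma=-1$ gives $\ln\frac{1+\mu_0}{1+\mu_1}\le\gamma T$; the two cases are precisely $T\ge\frac1\gamma\ln\frac{1-\sigma\mu_0}{1-\sigma\mu_1}$. The first inequality also gives $1-\mu(t)\ge(1-\mu_0)e^{-\gamma t}>0$ for every finite $t$ when $\mu_0<1$, so $\mu_1=1$ cannot be reached in finite time, which is the non-existence claim. (Both facts are equally a direct consequence of Proposition~\ref{prop:pureness}, since the state-to-state time dominates the time of the purity problem~\eqref{problem:pureness} with the same endpoint purities.)

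For the upper bound I construct an admissible control and split the motion into three stages. In the entry stage I bring $\rr^0$ onto the $z$-axis: one instantaneous $x$-rotation (impulse) moves it into the $xz$-plane, a free-rotation arc of angular length $\pi/2$ (time $\pi/(2\omega)$) annihilates the $r_x$-component, and a second impulse aligns the resulting planar vector with $(0,0,\pm\mu_0)$. In the middle stage the state lies on the optimal purity trajectory, so the choice $\theta\equiv\sigma\frac\pi2$ of Proposition~\ref{prop:pureness} drives $\mu$ from $\mu_0$ to $\mu_1$ in time exactly $\frac1\gamma\ln\frac{1-\sigma\mu_0}{1-\sigma\mu_1}$. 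The exit stage is the time reverse of the entry stage, again a $\pi/2$ arc flanked by impulses. The two arcs contribute the total rotation budget $\pi/\omega$ and the packaged impulses are exactly the endpoint $\delta$-functions of~\eqref{eq:u_from_theta}; summing the three stages produces the main terms $\frac\pi\omega+\frac1\gamma\ln\frac{1-\sigma\mu_0}{1-\sigma\mu_1}$.

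The hard part is the additive constant $e/\gamma$, and this is where essentially all the work sits. Its origin is that the dissipator does not switch off during the two free-rotation arcs, so the state neither stays on the constant-purity sphere nor lands exactly on $(0,0,\pm\mu_0)$ and on the prescribed target; each junction leaves a residual mismatch of size $O(\gamma/\omega)$. I would realise the paper's explicit four-impulse profile $\theta(t)$ (with the associated $v$ reconstructed into $u$ through~\eqref{eq:u_v_via_theta}), integrate the linear $(r_x,R)$-subsystem of~\eqref{eq:auxiliary_system} along it in closed form, and show that the mismatch is repaired within an extra time $e/\gamma$ uniformly in the data; the universal constant $e$ emerges from a one-parameter optimisation of the approach time (of the type $\max_s s\,e^{-s}=1/e$) over the admissible offset. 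The hypothesis $\mu_1\le 1-\frac\pi2\frac\gamma\omega$ enters precisely here: during the exit arc the dissipation pushes the purity toward the pure-state boundary by up to $\approx\frac\pi2\frac\gamma\omega$, and the hypothesis guarantees that $\mu_1$ leaves exactly this much room below $1$, so the repair stays admissible. This approach/repair estimate, rather than the clean purity bound, is the principal obstacle.
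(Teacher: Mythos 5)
Your lower bound and the non-reachability of $|\rr^1|=1$ are correct and essentially the paper's argument: the identity $\frac{\D}{\D t}|\rr|^2=\gamma\bigl(1-|\rr|^2-(r_z-1)^2\bigr)$, the resulting inequality $-\gamma(1+\mu)\le\dot\mu\le\gamma(1-\mu)$, and its integration are exactly how the paper proves Proposition~\ref{prop:pureness}, from which it then reads off the lower bound. That half is fine.

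The upper bound, however, has a genuine gap, and it sits precisely where you say ``essentially all the work sits.'' You never actually establish the additive constant $e/\gamma$; you only announce a plan, and the plan mischaracterises the mechanism. In the paper's construction there is no ``residual mismatch of size $O(\gamma/\omega)$'' to be ``repaired'': with $\theta$ held at $0$ or $\pi$ the $(r_x,R)$-subsystem is an exact rotation combined with uniform contraction at rate $\gamma/2$, so the entry arc of length $\tau_0=\frac1\omega|\arctan(r_x^0/R^0)|$ lands \emph{exactly} at $(0,\tilde R^0)$ with $\tilde R^0=|\rr^0|e^{-\gamma\tau_0/2}$, and the exit arc is aimed backwards from $(0,\tilde R^1)$ with $\tilde R^1=|\rr^1|e^{+\gamma\tau_1/2}$ so that it lands exactly on the target. (Your entry stage is also not quite right as stated: the arc that annihilates $r_x$ has angular length $|\arctan(r_x^0/R^0)|\le\pi/2$, not $\pi/2$.) The whole content of the $e/\gamma$ term is then the purely algebraic comparison of the middle-stage time $\frac1\gamma\ln\frac{1-\tilde\sigma\tilde R^0}{1-\tilde\sigma\tilde R^1}$ (between the \emph{shifted} radii $\tilde R^0,\tilde R^1$, with $\tilde\sigma=\sgn(\tilde R^1-\tilde R^0)$ possibly different from $\sigma$) against the ideal $\frac1\gamma\ln\frac{1-\sigma\mu_0}{1-\sigma\mu_1}$. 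The paper bounds the difference by a four-case analysis over the signs of $\sigma$ and $\tilde\sigma$, using $\frac\gamma\omega\le\frac2\pi$ and $|\rr^1|\le 1-\frac\pi2\frac\gamma\omega\le e^{-\frac\pi2\frac\gamma\omega}$; the constant $e$ emerges from estimates of the form $1+e^{\frac\pi4\frac\gamma\omega}\le 1+e^{1/2}\le e$, not from a $\max_s s\,e^{-s}=1/e$ optimisation as you conjecture. You also need, and do not verify, that the intermediate point $(0,\tilde R^1)$ is actually reachable from $(0,\tilde R^0)$, i.e.\ $\tilde R^1\le1$, which is the other place the hypothesis $|\rr^1|\le 1-\frac\pi2\frac\gamma\omega$ enters. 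Without carrying out this comparison the upper bound is unproved.
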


The proof of Theorem \ref{thm:time_estimates} provides explicit formulas for the control that achieves the upper bound on~$T$ (see~\eqref{eq:theta_suboptimal}, \eqref{eq:v_suboptimal}, and~\eqref{eq:u_from_theta}). This control contains four impulses.

\begin{proof}[Proof of Theorem~\ref{thm:time_estimates}]

    The estimates are based on the investigation of the minimal time of motion in the auxiliary system~\eqref{eq:auxiliary_system} and Theorem~\ref{thm:times_original_auxiliary_equal}.

    The lower estimate for time $T$ is related to the difficulty of changing the purity level of the state $\rr$ in system~\eqref{eq:general_control_system}. Let us assume that some trajectory of the control system transforms the state $\rr^0$ into the state $\rr^1$ in a certain time $T$. Then in the problem of reaching a given purity level $P_1=(1+|\rr^1|^2)/2$ from an arbitrary point at the purity level $P_0=(1+|\rr^0|^2)/2$, the minimal time $\hat T$ cannot be worse than $T$, i.e., $\hat T \leq T$. Indeed, it is possible to achieve the desired purity level within the time $T$ simply by following the original trajectory. Therefore, in terms of the auxiliary system~\eqref{eq:auxiliary_system}, we obtain the problem~\eqref{problem:pureness}. The claimed lower estimate for the time $T$ immediately follows from Proposition~\ref{prop:pureness}.

    The upper estimate is obtained as follows: it is sufficient to construct a trajectory that realizes the given time of motion. Doing this for the original 3D control system~\eqref{eq:main_quantum_system} by 1D control can be quite difficult. However, according to Theorem~\ref{thm:times_original_auxiliary_equal}, it is enough to provide such a trajectory for the auxiliary 2D control system~\eqref{eq:auxiliary_system}, which has in fact 2D set of admissible velocities (see Remark~\ref{rm:2d_3d}).

    The initial part of the motion is designed as follows: from the initial state $(r^0_x, R^0)$ within a time interval
	\[
		\tau_0 = \frac1\omega \left|\arctan \frac{r_x^0}{R^0}\right|
	\]
	by using the control
	\[
		\theta=\frac\pi2\left(
            1-\sgn\arctan \frac{r_x^0}{R^0}
        \right)
	\]
	we reach a state with $r_x(\tau_0)=0$. At that state we have
	\[
		R(\tau_0)=\tilde R^0:=|\rr^0|e^{-\frac\gamma2\tau_0}.
	\]

    The middle part is a motion along the line $r_x(t)\equiv0$, and this motion is described latter.

    The final part of the motion is arranged similarly to the initial part: to complete the motion at the point $(r^1_x, R^1)$ at a certain time $\tilde T$, it is necessary to move from the point with $r_x=0$ for a duration of
	\[
		\tau_1 = \frac1\omega \left|\arctan \frac{r_x^1}{R^1}\right|
	\]
	by using the control
	\[
		\theta=\frac\pi2\left(
            1+\sgn\arctan \frac{r_x^1}{R^1}
        \right)
	\]
	Then, if at time moment $\tilde T - \tau_1$ we have $r_x(\tilde T-\tau_1)=0$ and 
	\[
		R(\tilde T - \tau_1)=\tilde R^1:=|\rr^1|e^{\frac\gamma2\tau_1}
	\]
    then the motion at time $\tilde T$ will indeed end at the desired point $(r_x^1, R^1)$. Note that the point $(r_x, R) = (0, \tilde R^1)$ is reachable from the state $(r_x, R) = (0, \tilde R^0)$ since (according to the conditions of the theorem) $|\rr^1|\le 1-\frac\pi2\frac\gamma\omega$, and therefore,
 	\[
		\tilde R^1 \le 
		\big(1-\frac\pi2\frac\gamma\omega\big)e^{\frac\pi4\frac\gamma\omega}\le
		e^{-\frac\pi2\frac\gamma\omega}e^{\frac\pi4\frac\gamma\omega}=
		e^{-\frac\pi4\frac\gamma\omega} \le 1.
	\]
    Moreover, $\frac{\gamma}{\omega}\leq \frac{2}{\pi}$, since $|\rr^1|\geq 0$.

    It remains to organize the movement in the intermediate time interval $t\in[\tau_0;\tilde T-\tau_1]$ from the point $(r_x,R)=(0,\tilde R^0)$ to the point $(r_x,R)=(0,\tilde R_1)$. This movement takes the longest time, but it can be easily organized optimally using Proposition~\ref{prop:pureness}. Specifically, it is necessary to use the control
    \[
		\theta=\tilde\sigma\frac\pi2
	\]
    during the time
	\[
		\tau_{\frac12} = \frac1\gamma\ln\left(\frac{1-\tilde\sigma\tilde R^0}{1-\tilde\sigma\tilde R^1}\right)
	\]
	where $\tilde\sigma=\sgn(\tilde R^1-\tilde R^0)$.
	
	\medskip

    Therefore, the final state $(r_x^1,R^1)$ can be reached from the initial state $(r_x^0,R_x^1)$ within a time interval of
    \[
		\tilde T = \tau_0 + \tau_{\frac12} + \tau_1
	\]
	by using the folling control
	\begin{equation}
	\label{eq:theta_suboptimal}
		\theta(t) = \begin{cases}
			\frac\pi2(1-\sgn\arctan \frac{r_x^0}{R^0}),&\text{for } t\in(0;\tau_0)\\
			\frac\pi2\sgn(\tilde R^1-\tilde R^0),&\text{for } t\in(\tau_0;\tau_0+\tau_{\frac12})\\
			\frac\pi2(1+\sgn\arctan \frac{r_x^1}{R^1}),&\text{for } t\in(\tau_0+\tau_{\frac12},\tilde T)\\
		\end{cases}
	\end{equation}
    The control $v$ is determined by the formula $v=\frac1\omega\dot\theta$ and contains 4 impulses or fewer:
	\begin{multline}
	\label{eq:v_suboptimal}
		v=\frac1\omega\Big(
			(\theta(0)-\theta^0)\delta_0(t) + 
			\frac\pi2 (-1+\sgn(\tilde R^1-\tilde R^0) + \sgn\arctan \frac{r_x^0}{R^0}) \delta_{\tau_0}(t) +\\
			+\frac\pi2 (1-\sgn(\tilde R^1-\tilde R^0) + \sgn\arctan \frac{r_x^1}{R^1}) \delta_{\tilde T-\tau_1}(t) + 
			(\theta^1-\theta(\tilde T))\delta_0(t)
		\Big).
	\end{multline}
	The control $u$ is obtained using the formula~\eqref{eq:u_from_theta} and, consequently, also has no more than 4 impulses.

    Since the optimal travel time $T$ does not exceed the proposed $\tilde T$, $T \leq \tilde T$, we only need to estimate $\tau_0$, $\tau_{\frac12}$, and $\tau_1$. It is obvious that
	\[
		\tau_0+\tau_1 \le \frac\pi2 \frac1\omega + \frac\pi2 \frac1\omega =\frac\pi\omega.
	\]

	It remains to find a convenient upper bound for the time $\tau_{\frac12}$. The first step is as follows:
	\[
	\ln\left(
		\frac{1-\tilde\sigma\tilde R^0}
			{1-\tilde\sigma\tilde R^1}
	\right) -
	\ln\left(
		\frac{1-\sigma|\rr^0|}
			{1-\sigma|\rr^1|}
	\right) =
	\ln\left(
		1 + \frac{ |\rr^0|(\sigma - 
			\tilde\sigma e^{-\frac\gamma2\tau_0}) }
		{1-\sigma|\rr^0|}
	\right) +
	\ln\left(
		1 + \frac{ |\rr^1|(\tilde\sigma e^{\frac\gamma2\tau_1} - 
			\sigma) }
		{1-\tilde\sigma|\rr^1|e^{\frac\gamma2\tau_1}}
	\right) = a+b.
	\]
 
	Let us consider all 4 possible cases of signs $\sigma=\pm1$ and $\tilde\sigma=\pm1$.
	\begin{enumerate}
		
		\item Let $\sigma=\tilde\sigma=1$. Then $|\rr^0|\le |\rr^1|\le 1-\frac\pi2\frac\gamma\omega\le e^{-\frac\pi2\frac\gamma\omega}$. Since $\frac\gamma\omega\le\frac2\pi$, we obtain
		\[
			a \le \ln\left( 1 +
				\frac{1-e^{-\frac\pi4\frac\gamma\omega}}{1-e^{-\frac\pi2\frac\gamma\omega}} 
			\right) =
			\ln\left( 1 +
				\frac{1}{1+e^{-\frac\pi4\frac\gamma\omega}} 
			\right) \le
			\ln\left(1+\frac1{1+e^{-\frac12}}\right) \le \frac12
		\]
		\[
			b\le \ln\left( 1 +
				\frac{e^{\frac\pi4\frac\gamma\omega}-1}
					{1-(1-\frac\pi2\frac\gamma\omega)e^{\frac\pi4\frac\gamma\omega}} 
			\right) \le
			\ln\left( 1 +
				\frac{e^{\frac\pi4\frac\gamma\omega}-1}{1-e^{-\frac\pi4\frac\gamma\omega}} 
			\right) = 
			\ln (1 + e^{\frac\pi4\frac\gamma\omega}) \le
			\ln (1 + e^{\frac12} ) \le 1
		\]
		Thus, in this case, $a+b\le \frac32$.
		
		\item Let $\sigma=\tilde\sigma=-1$. Then $a+b\le 0$ since $a\le 0$ and $b\le 0$.
		
		\item Let $\sigma=1$ and $\tilde\sigma=-1$. Then $|\rr^1|\ge |\rr^0|$, but $\tilde R^0\ge \tilde R^1$. The last inequality is equivalent to the inequality $e^{-\frac\gamma2\tau_0}|\rr^0|\ge e^{\frac\gamma2\tau_1}|\rr^1|$, which does not contradict to the inequality $|\rr^1|\ge|\rr^0|$ only if $\tau_0=\tau_1=0$. In this case, $\tilde R_0=\tilde R_1=|\rr^0|=|\rr^1|$, and $a+b=0$.
		
		\item Let $\sigma=-1$ and $\tilde\sigma=1$. Then $|\rr^1|\le |\rr^0|$, but $\tilde R^0\le \tilde R^1$. Hence
		\[
			\tilde R^0 \le \tilde R^1 \le e^{\frac\pi2\frac\gamma\omega}\tilde R^0.
		\]
		moreover, $|\rr^1|\le 1-\frac\pi2\frac\gamma\omega\le e^{-\frac\pi2\frac\gamma\omega}$ implies $\tilde R^0\le\tilde R^1\le e^{-\frac\pi4\frac\gamma\omega}$. Therefore
		\[
			\ln\left(
				\frac{1-\tilde R^0}{1-\tilde R^1}
			\right) =
			\ln\left(
				1 + \frac{\tilde R^1-\tilde R^0}{1-\tilde R^1}
			\right)	\le
			\frac{\tilde R^0(e^{\frac\pi2\frac\gamma\omega}-1)}{1-\tilde R^1} \le
			\frac{e^{\frac\pi2\frac\gamma\omega}-1}{e^{\frac\pi4\frac\gamma\omega}-1} =
			1+e^{\frac\pi4\frac\gamma\omega}\le e,
		\]
		as $\frac\gamma\omega\le\frac2\pi$. Additionally, using $|\rr^0|\ge |\rr^1|$ we obtain
		\[
			\ln\left(
				\frac{1+|\rr^0|}{1+|\rr^1|}
			\right) \ge 0.
		\]
		So, in this case, we have $a+b\le e$.
	\end{enumerate}

    Putting together all 4 cases, we obtain that $a+b\le\max\{\frac32,0,0,e\}=e$, which is what was required.
\end{proof}


\appendix

\section{Proof of Proposition~\ref{prop:auxiliary_existence}}
\label{sec:appendix_proof_prop:auxiliary_existence}

To prove the existence of an optimal solution to the auxiliary problem \eqref{eq:auxiliary_system}, we will use Filippov's theorem (see~\cite{AgrachevSachkov}). First, note that the right-hand side of the system satisfies the estimate $|(\dot r_x, \dot R)| \leq c |(r_x, R)|$, implying that all the vector fields involved in the problem are complete. The sets $M_0$ and $M_1$ are compact. Furthermore, the right-hand side of the auxiliary system \eqref{eq:auxiliary_system} for $\theta \in [0, 2\pi]$ forms the boundary of a compact set $U(r_x, R)$. It is evident that the boundary $\partial U(r_x, R)$ is also a compact set and therefore continuously (in the Hausdorff metric) depends on $r_x$ and $R$. Thus, only one condition of Filippov's theorem on the existence of an optimal control is not satisfied. Namely, the boundary of the set $U(r_x, R)$ is not a convex set. However, we can relax system \eqref{eq:auxiliary_system} and consider instead of the original control system $(\dot r_x, \dot R) \in \partial U(r_x, R)$ the convexified system $(\dot r_x, \dot R) \in \mathrm{conv}\, \partial U(r_x, R)$, for which all the conditions of Filippov's theorem are satisfied, and hence an optimal solution exists in the time minimization problem.

Let us demonstrate that the optimal control in the convexified problem is indeed admissible for the original system \eqref{eq:auxiliary_system} (and hence optimal in it as well). To do so, we will show that the set $U(r_x,R)$ is convex for any $r_x$ and $R$. Thus, for any fixed $r_x$ and $R$, the tangent vector $(\xi, \eta)$ to the boundary of the admissible velocity set in the right-hand side of system \eqref{eq:auxiliary_system} at the point $\theta$ has the following form:
\[
	\xi=\frac1\omega\frac{\D \dot r_x}{{\D}\theta} = R\sin\theta;
	\qquad
	\eta=\frac1\omega\frac{\D \dot R}{{\D}\theta} = \frac\gamma\omega\cos\theta\,(1-R\sin\theta)- r_x\sin\theta.
\]
Let us fix $r_x$ and $R$ and show that the vector $(\xi, \eta)$ rotates counterclockwise as $\theta$ increases if $R > 0$, and clockwise if $R < 0$. Indeed,
\begin{equation}
\label{eq:U_curvature}
	\xi'_\theta\eta - \xi\eta'_\theta = \frac\gamma\omega  R(1-R\sin^3\theta).
\end{equation}
Therefore, if $R > 0$, we have $\xi'_\theta\eta - \xi\eta'_\theta\ge 0$ (since $|R|\le 1$). Similarly, if $R < 0$. In the case where $R=0$, we have $\xi'\theta_\eta - \xi\eta'_\theta\equiv 0$, and the set $U(r_x,R)$ becomes a line segment (which is convex).

Thus, the original system \eqref{eq:auxiliary_system} takes the form $(\dot r_x,\dot R)\in\partial U(r_x,R)$, while the convexified system is given by $(\dot r_x,\dot R)\in U(r_x,R)$, since $\mathrm{conv}\,\partial U(r_x,R)=\mathrm{conv}\,U(r_x,R)=U(r_x,R)$ due to the convexity and compactness of $U(r_x,R)$. Let us demonstrate that the optimal control in the new problem with the convexified velocity set always lies on the boundary of the set. To do so, we will apply the Pontryagin's maximum principle to the new problem\footnote{The possibility of applying the Pontryagin's maximum principle is guaranteed here by the fact that the set $U(r_x,R)$ can be parameterized by a two-dimensional control from the unit ball.}. The Pontryagin's function takes the form:
\[
	\mathcal{H} = p\dot r_x + q\dot R
\]
where $(p,q)$ are the conjugate variables to $(r_x,R)$, and $(p,q)\neq(0,0)$.

According to the maximum principle, the optimal control at a.e.\ moment of time maximizes the Pontryagin function $\mathcal{H}$.
\[
	\mathcal{H}\to\max
	\quad\text{w.r.t.}\quad
	(\dot r_x,\dot R)\in U(r_x,R).
\]
Therefore, the optimal velocity $(\dot r_x,\dot R)$ lies on the boundary $\partial U(r_x,R)$, which is determined by the original control $\theta$, as required.

Summarizing, we have proved existence of an optimal solution to problem~\eqref{eq:auxiliary_system}.

\medskip

Let us now investigate the continuity and analyticity of the optimal control. From the Pontryagin's maximum principle, it follows that the optimal control maximizes $\mathcal{H}=p\dot r_x + q\dot R$. When $R\neq 0$, set $U(r_x,R)$ is strictly convex, and when $R=0$, it becomes a vertical line segment. Therefore, if $R\neq 0$, the function $\mathcal{H}$ attains a unique global maximum (up to the period in $\theta$), which is also the global maximum. If $R=0$ and covector $(p,q)$ is not horizontal (i.e., $q\neq 0$), then the global maximum of $\mathcal{H}$ is also unique. Hence, if $R^2+q^2\neq 0$, the function $\mathcal{H}$ has a unique point of maximum, which we denote as $\theta_M(r_x,R,p,q)$.

\begin{lemma}
    If $R^2+q^2\ne 0$, then the function $\theta_M(r_x,R,p,q)$ is continuous.
\end{lemma}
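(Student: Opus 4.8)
The plan is to establish continuity of the maximizer $\theta_M$ via a standard topological argument: the maximizer of a continuous function over a continuously-varying compact set is upper semicontinuous as a set-valued map, and when the maximizer is \emph{unique} this forces genuine continuity. First I would make precise the dependence of the data on the parameters. The Pontryagin function $\mathcal{H}(\theta;r_x,R,p,q) = p\,\omega R(-\tfrac12\tfrac\gamma\omega r_x - R\cos\theta) + q\,\omega(\cos\theta\, r_x - \tfrac12\tfrac\gamma\omega(1+\sin^2\theta)R + \tfrac\gamma\omega\sin\theta)$ (reading the right-hand side off system~\eqref{eq:auxiliary_system}) is jointly continuous in all five arguments $(\theta,r_x,R,p,q)$, indeed real-analytic, and $2\pi$-periodic in $\theta$, so for each fixed parameter tuple it attains its maximum over the compact circle $\theta\in\R/2\pi\mathbb{Z}$.

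The core step is a compactness/contradiction argument. Fix a parameter point $(r_x^*,R^*,p^*,q^*)$ with $(R^*)^2+(q^*)^2\neq 0$, and suppose for contradiction that $\theta_M$ is not continuous there: take a sequence $(r_x^n,R^n,p^n,q^n)\to(r_x^*,R^*,p^*,q^*)$ with $\theta_M^n:=\theta_M(r_x^n,R^n,p^n,q^n)$ staying outside some neighborhood of $\theta_M^*:=\theta_M(r_x^*,R^*,p^*,q^*)$. By compactness of the circle, pass to a subsequence along which $\theta_M^n\to\theta^\dagger$ for some $\theta^\dagger\neq\theta_M^*$. Since each $\theta_M^n$ maximizes $\mathcal{H}(\cdot;r_x^n,R^n,p^n,q^n)$, we have $\mathcal{H}(\theta_M^n;\cdots)\ge\mathcal{H}(\theta;\cdots)$ for every $\theta$; passing to the limit using joint continuity of $\mathcal{H}$ gives $\mathcal{H}(\theta^\dagger;r_x^*,R^*,p^*,q^*)\ge\mathcal{H}(\theta;r_x^*,R^*,p^*,q^*)$ for all $\theta$, so $\theta^\dagger$ is also a global maximizer at the limit point. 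But the hypothesis $(R^*)^2+(q^*)^2\neq 0$ guarantees, as already established in the paragraph preceding the lemma, that the maximizer at the limit point is \emph{unique}; hence $\theta^\dagger=\theta_M^*$, contradicting $\theta^\dagger\neq\theta_M^*$. This contradiction proves continuity.

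The one subtlety worth flagging, and the place I expect the only real friction, is the uniqueness input at the limit point and the handling of the two regimes $R^*\neq 0$ versus $R^*=0$ uniformly. When $R^*\neq 0$ the admissible-velocity set $U(r_x^*,R^*)$ is strictly convex (equation~\eqref{eq:U_curvature} shows the boundary curvature is strictly positive away from $R=0$), so a nonzero covector $(p^*,q^*)$ singles out a unique boundary maximizer; when $R^*=0$ the set degenerates to a vertical segment and uniqueness is exactly what the condition $q^*\neq 0$ supplies. Since the preceding discussion already records that $R^2+q^2\neq 0$ yields a unique maximizer $\theta_M$, I would simply invoke that fact at the limit point $(r_x^*,R^*,p^*,q^*)$ — which satisfies the hypothesis by assumption — rather than re-deriving it. The argument above is purely topological and treats both regimes at once, so no case split in the continuity proof itself is needed; the entire weight rests on joint continuity of $\mathcal{H}$ plus uniqueness of the maximizer, both already in hand.
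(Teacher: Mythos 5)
Your proof is correct and is essentially the paper's own argument, just written out in full: the paper likewise deduces continuity immediately from the uniqueness of the global maximizer of the continuous, parameter-dependent function $\mathcal{H}$ on the compact circle $\theta\in\R/2\pi\mathbb{Z}$, and your compactness/subsequence contradiction is the standard way to make that deduction explicit. (Minor typo only: your displayed $\mathcal{H}$ has a stray factor $R$ in the $p$-term, which should read $p\,\omega\bigl(-\tfrac12\tfrac\gamma\omega r_x - R\cos\theta\bigr)$; this does not affect the argument.)
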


\begin{proof}

    The continuity of the function follows immediately from the uniqueness of the point of global maximum for a continuous function $\mathcal{H}$ on the compact set $\theta\in\R/2\pi\mathbb{Z}$ that continuously depend on parameters $(r_x,R,p,q)$.
\end{proof}

\begin{lemma}
    If $|R|<1$ and $R^2+q^2\ne 0$, then the funvtion $\theta_M(r_x,R,p,q)$ is analitic.
\end{lemma}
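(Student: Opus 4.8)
The plan is to realize $\theta_M$ as the (locally unique) root of the stationarity equation $\partial_\theta\mathcal{H}=0$ and to invoke the analytic implicit function theorem; essentially all the difficulty is concentrated in verifying that the maximum is \emph{non-degenerate}, i.e. $\partial_\theta^2\mathcal{H}(\theta_M)<0$, precisely under the hypotheses $|R|<1$ and $R^2+q^2\ne0$.

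First I would record the structural fact that, for fixed $(r_x,R,p,q)$, the Pontryagin function $\mathcal{H}=p\dot r_x+q\dot R$ is a trigonometric polynomial of degree two in $\theta$, namely $\mathcal{H}(\theta)=\mathrm{const}+\omega(qr_x-pR)\cos\theta+q\gamma\sin\theta-\tfrac12 q\gamma R\sin^2\theta$, whose coefficients are polynomials in $(r_x,R,p,q)$. Hence $\mathcal{H}$ and all its $\theta$-derivatives are jointly real-analytic in $(\theta,r_x,R,p,q)$. Since $\theta_M$ is the global maximizer of the $2\pi$-periodic analytic map $\theta\mapsto\mathcal{H}(\theta)$, it solves $F(\theta_M,r_x,R,p,q):=\partial_\theta\mathcal{H}(\theta_M)=0$.

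The crux is the non-degeneracy, and here I would feed in the curvature computation \eqref{eq:U_curvature}. Writing $(\xi,\eta)=\frac1\omega\partial_\theta(\dot r_x,\dot R)$ for the boundary tangent, the maximum condition at $\theta_M$ reads $(p,q)\perp(\xi,\eta)$, so $(p,q)=s(-\eta,\xi)$ for some scalar $s$. When $R\ne0$ one checks $(\xi,\eta)\ne(0,0)$ (if $\sin\theta=0$ then $\eta=\frac\gamma\omega\cos\theta\ne0$, otherwise $\xi=R\sin\theta\ne0$), and since $(p,q)\ne(0,0)$ by the maximum principle, $s\ne0$. Differentiating once more and substituting $(p,q)=s(-\eta,\xi)$ gives $\partial_\theta^2\mathcal{H}(\theta_M)=-\omega s\,(\xi'_\theta\eta-\xi\eta'_\theta)=-s\gamma R(1-R\sin^3\theta_M)$ by \eqref{eq:U_curvature}. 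For $|R|<1$ one has $1-R\sin^3\theta_M>0$, and the fact that $\theta_M$ is a maximum forces $sR\ge0$; combined with $sR\ne0$ this yields $\partial_\theta^2\mathcal{H}(\theta_M)<0$. The remaining case $R=0$, $q\ne0$ I would treat directly: there $\mathcal{H}=\mathrm{const}+q\omega r_x\cos\theta+q\gamma\sin\theta$, whose maximal value exceeds the constant by $|q|\sqrt{\omega^2 r_x^2+\gamma^2}>0$ (as $\gamma>0$), so $\partial_\theta^2\mathcal{H}(\theta_M)=-|q|\sqrt{\omega^2 r_x^2+\gamma^2}<0$ again. Note the excluded case $R=0$, $q=0$ is exactly where $\mathcal{H}$ becomes $\theta$-independent, which is why the hypothesis $R^2+q^2\ne0$ cannot be dropped.

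With $\partial_\theta F=\partial_\theta^2\mathcal{H}(\theta_M)\ne0$ in hand, the analytic implicit function theorem supplies, near every admissible base point, a unique real-analytic branch solving $F=0$ and matching $\theta_M$ at that point. By the previous lemma $\theta_M$ is continuous and also solves $F=0$, so it coincides locally with this analytic branch; therefore $\theta_M$ is analytic on the whole open set $\{\,|R|<1,\ R^2+q^2\ne0\,\}$. The main obstacle is the non-degeneracy step, and the geometric content of \eqref{eq:U_curvature} — strict convexity of the velocity set away from $|R|=1$ — is exactly what keeps $\partial_\theta^2\mathcal{H}$ from vanishing.
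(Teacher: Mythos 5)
Your proposal is correct and follows essentially the same route as the paper: both reduce the claim to the analytic implicit function theorem for $\partial_\theta\mathcal H=0$ and verify non-degeneracy $\partial_\theta^2\mathcal H(\theta_M)\ne 0$ via the curvature identity \eqref{eq:U_curvature} when $R\ne0$, handling the residual case $R=0$, $q\ne0$ by a direct computation. The only cosmetic difference is that the paper deduces $\xi_\theta\eta-\xi\eta_\theta=0$ from the vanishing of the $2\times2$ system in $(p,q)$, whereas you parametrize $(p,q)=s(-\eta,\xi)$ and evaluate $\partial_\theta^2\mathcal H$ explicitly; these are the same linear algebra.
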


\begin{proof}

    At the point of maximum, we have $0=\frac1\omega\mathcal{H}\theta=p\xi+q\eta$. Since the solution $\theta_M(r_x,R,p,q)$ is continuous, to prove its analyticity, it is sufficient to verify that the conditions of the implicit function theorem are satisfied, i.e., $0\ne\frac1\omega\mathcal{H}_{\theta\theta} = p\xi_\theta + q\eta_\theta$. If the implicit function theorem conditions are not satisfied at the point of maximum, then we have the pair of equalities $\mathcal{H}_\theta=\mathcal{H}_{\theta\theta}=0$. The covector $(p,q)$ cannot be zero according to the maximum principle, hence conditions $\mathcal{H}_\theta=\mathcal{H}_{\theta\theta}=0$ imply $\xi_\theta\eta - \xi\eta_\theta=0$. According to the lemma's assumption, $|R|<1$, so from equation~\eqref{eq:U_curvature}, it follows that the conditions of the implicit function theorem can only be violated if $R=0$.

    Let us show that if $R=0$, but $q\ne 0$, then the conditions of the implicit function theorem are satisfied. Indeed, if $R=0$, Then $\frac1\omega\mathcal{H}_\theta=q(\frac\gamma\omega\cos\theta - r_x\sin\theta)$ and $\frac1\omega\mathcal{H}_{\theta\theta}=-q(\frac\gamma\omega\sin\theta + r_x\cos\theta)$. Hence, $\frac1\omega(\mathcal{H}_\theta\cos\theta-\mathcal{H}_{\theta\theta}\sin\theta)=\frac\gamma\omega q\ne 0$.
\end{proof}

Thus, the analyticity of $\theta_M$ can be lost when $|R|=1$, while continuity (and analyticity) can be lost when $R=q=0$. The following two lemmas demonstrate that on the optimal trajectory, the first case can only occur at the initial time instant $t=0$, and the second case can never occur.

\begin{lemma}

    Let $(r_x(t), R(t))$ be a trajectory of the control system with a certain control $\theta(t)$ (not necessarily optimal). If at some time instant $t_0$ we have $r_x^2(t_0) + R^2(t_0) < 1$, then the same inequality holds for all $t \geq t_0$.

\end{lemma}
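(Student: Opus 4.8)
The plan is to track the squared radius $\Phi(t) = r_x^2(t) + R^2(t)$ along the trajectory and to show it satisfies a linear differential inequality that traps it strictly below $1$. First I would differentiate $\Phi$ and substitute the right-hand side of system~\eqref{eq:auxiliary_system}, that is, $\dot r_x = -\tfrac12\gamma r_x - \omega\cos\theta\,R$ and $\dot R = \omega\cos\theta\,r_x - \tfrac12\gamma(1+\sin^2\theta)R + \gamma\sin\theta$. The decisive simplification is that the two contributions coming from the off-diagonal entries $\mp\omega\cos\theta$ of the coefficient matrix cancel exactly in $r_x\dot r_x + R\dot R$, because they enter with opposite signs. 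What survives is
\[
	\tfrac12\dot\Phi = -\tfrac12\gamma r_x^2 - \tfrac12\gamma(1+\sin^2\theta)R^2 + \gamma R\sin\theta.
\]

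Next I would complete the square in the quantity $R\sin\theta$. Using $\sin^2\theta\,R^2 - 2R\sin\theta = (R\sin\theta - 1)^2 - 1$, the right-hand side collapses neatly into
\[
	\dot\Phi = -\gamma(\Phi - 1) - \gamma(R\sin\theta - 1)^2.
\]
The last term is manifestly nonpositive, so I obtain the differential inequality $\dot\Phi \le -\gamma(\Phi - 1)$. This holds for almost every $t$: since $\theta$ is only measurable, the trajectory $(r_x, R)$ is the absolutely continuous solution of a linear ODE with measurable coefficients, hence $\Phi$ is absolutely continuous and the inequality is an a.e.\ statement, which is all the final step requires.

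Finally I would invoke the integrating factor $e^{\gamma t}$. Setting $\Psi = \Phi - 1$, the inequality reads $\frac{\D}{\D t}\bigl(e^{\gamma t}\Psi\bigr) = e^{\gamma t}(\dot\Psi + \gamma\Psi) \le 0$, so $t \mapsto e^{\gamma t}\Psi(t)$ is non-increasing. If $\Phi(t_0) < 1$, then $\Psi(t_0) < 0$, and therefore $e^{\gamma t}\Psi(t) \le e^{\gamma t_0}\Psi(t_0) < 0$ for every $t \ge t_0$, which forces $\Phi(t) < 1$, as claimed.

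I do not anticipate a genuine obstacle here: the assertion is the quantitative sharpening of the observation in Remark~\ref{rm:sing_R} that $\frac{\D}{\D t}(r_x^2 + R^2) \le 0$ on the boundary circle, promoted to strict forward-invariance of the open disk. The only point deserving a word of care is regularity — because $\theta$ is merely measurable one argues with the a.e.\ inequality and the absolute continuity of $\Phi$ rather than with a classical ODE — but the Gr\"onwall/integrating-factor conclusion is entirely unaffected by this.
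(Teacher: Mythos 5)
Your computation reproduces exactly the paper's identity $\dot\Phi = -\gamma(\Phi-1) - \gamma(R\sin\theta-1)^2$ (equation~\eqref{eq:pureness_derivative}), and your Gr\"onwall/integrating-factor conclusion is precisely the paper's observation that $1 - r_x^2 - R^2$ cannot decay to zero faster than $ce^{-\gamma t}$. The proposal is correct and follows essentially the same route as the paper, with the regularity remark being a welcome but minor addition.
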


\begin{proof}

    Indeed,
    \begin{equation}
    \label{eq:pureness_derivative}
	   \frac{1}{\omega}\frac{\D}{\D t}(r_x^2+R^2) = \frac\gamma\omega (1-r_x^2-R^2-(1-R\sin\theta)^2),
    \end{equation}
    Therefore, for any choice of $\theta(t)$, the rate of change of the squared distance from the point $(r_x(t), R(t))$ to the circle of pure states ${r_x^2 + R^2 = 1}$ satisfies the following differential inequality
    \[
	   \frac1\omega\frac{\D}{\D t}(1-r_x^2-R^2) \ge -\frac\gamma\omega (1-r_x^2-R^2),
    \]
    that is, it cannot approach zero faster than an exponential function $ce^{-\gamma t}$. Therefore, if $r_x^2(t_0) + R^2(t_0) < 1$, then $r_x^2(t) + R^2(t) < 1$ for all $t \geq t_0$.
    
\end{proof}

Therefore, if $|R(t_0)|=1$ holds on an optimal trajectory, then $t_0=0$. Indeed, if $|R(t_0)|=1$ for some $t_0>0$, then, according to the lemma, we have $r_x^2(t)+R^2(t)=1$ for all $t\in[0;t_0]$. Moreover, formula~\eqref{eq:pureness_derivative} guarantees that $\theta=\frac\pi2\sgn R(t)$ and $|R(t)|=1$ for almost all $t\in[0;t_0]$. Since the function $R(t)$ is continuous, we obtain $R(t)\equiv R(t_0)=\pm1$ and $r_x(t)\equiv 0$ for all $t\in[0;t_0]$. Thus, if $t_0>0$, the trajectory remains stationary for some time and, therefore, it is not optimal (it does not minimize the motion time).

\begin{lemma}
    On any solution of the Pontryagin's maximum principle, we have $R^2(t)+q^2(t)\neq 0$ for all $t$.
\end{lemma}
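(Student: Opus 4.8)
My plan is to argue by contradiction. Suppose that on some extremal there is a time $t_0$ with $R(t_0)=q(t_0)=0$; I will show this forces $\gamma=0$. Because the maximum principle keeps the covector $(p,q)$ away from the origin, at such a point $p_0:=p(t_0)\ne0$; write also $x_0:=r_x(t_0)$. First I would record the adjoint equations coming from $\mathcal{H}=p\dot r_x+q\dot R$, namely $\dot p=\tfrac12\gamma p-\omega q\cos\theta$ and $\dot q=\omega p\cos\theta+\tfrac12\gamma(1+\sin^2\theta)q$, together with the stationarity condition $p\xi+q\eta=0$ (equivalently $\mathcal{H}_\theta=0$), valid a.e.\ where $R\ne0$, with $\xi,\eta$ as in \eqref{eq:U_curvature}.

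Before the asymptotic analysis I would rule out the degenerate possibility that $R\equiv0$ on a one-sided neighbourhood of $t_0$. There $\dot R\equiv0$ gives $\omega r_x\cos\theta+\gamma\sin\theta\equiv0$, so the admissible velocity is the midpoint of the vertical segment $U(r_x,0)$; since the maximum principle puts the velocity on $\partial U$ and this midpoint maximises $\mathcal{H}$ only when $q=0$, it would force $q\equiv0$, hence $\dot q=\omega p\cos\theta\equiv0$ and $\cos\theta\equiv0$, whence $\dot R=\pm\gamma\ne0$ --- a contradiction. Thus $R\not\equiv0$ near $t_0$.

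Next I would invoke analyticity. On each side of $t_0$ we have $R\ne0$, hence $R^2+q^2\ne0$ and $|R|<1$, so by the lemma on analyticity of $\theta_M$ the whole extremal, in particular $R(t)$, $q(t)$, $\theta(t)$, is analytic there. Expanding $R$ and $q$ at $t_0$ I expect leading behaviour $R(t)=\dot R_0\,(t-t_0)+\dots$, $q(t)=\dot q_0\,(t-t_0)+\dots$ with $\dot R_0=\omega x_0\cos\theta_0+\gamma\sin\theta_0$ and $\dot q_0=\omega p_0\cos\theta_0$, where $\theta_0=\lim_{t\to t_0}\theta(t)$. Establishing this limit is the crux: from $\dot q=\omega p\cos\theta+O(q)$ one reads $\cos\theta(t)\to\dot q_0/(\omega p_0)$ once $q$ vanishes to first order, while if $q$ vanished to higher order then $\cos\theta_0=0$, $\dot R_0=\pm\gamma$, and the leading term of the stationarity condition would be $p_0\omega\gamma\,(t-t_0)\ne0$, impossible; so $q$ vanishes exactly to first order and $\theta_0$ exists.

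Finally I would pass to the limit in $p\xi+q\eta=0$: dividing by $(t-t_0)$ and letting $t\to t_0$ gives the consistency relation
\[
(p_0\omega\dot R_0-\omega x_0\dot q_0)\sin\theta_0+\gamma\dot q_0\cos\theta_0=0.
\]
Substituting $\dot q_0=\omega p_0\cos\theta_0$ and $\dot R_0=\omega x_0\cos\theta_0+\gamma\sin\theta_0$, the $x_0$-terms cancel and $\sin^2\theta_0+\cos^2\theta_0=1$ collapses the left-hand side to $\omega p_0\gamma$; since $\omega,p_0\ne0$ this yields $\gamma=0$, the desired contradiction. The main obstacle is plainly the third step --- proving that $\theta(t)$ converges and that $R,q$ vanish to exactly first order at $t_0$ --- which is exactly where the analyticity of the extremal established in the preceding lemmas is indispensable; the concluding algebra is a short, clean cancellation.
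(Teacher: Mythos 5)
Your overall strategy is sound and your concluding algebra is correct: if the one-sided limits $\theta_0=\lim_{t\to t_0}\theta(t)$, $\dot R_0$ and $\dot q_0$ all existed and $R,q$ vanished to exactly first order, then passing to the limit in $p\xi+q\eta=0$ does collapse to $\omega p_0\gamma=0$, a contradiction. Your preliminary steps (using $(p,q)\ne(0,0)$ to get $p_0\ne0$, and ruling out $R\equiv0$ on a one-sided neighbourhood) are also fine and parallel the paper's measure-zero argument for the set $\mathbb{T}=\{t:R=q=0\}$.

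The genuine gap is exactly the step you yourself flag as ``the crux'': you never establish that $\theta(t)$ converges as $t\to t_0$, nor that $R$ and $q$ admit first-order Taylor expansions at $t_0$. The analyticity lemma you invoke only applies on the open set where $R^2+q^2\ne0$, i.e.\ on $(t_0,t_0+\varepsilon)$ but not at $t_0$ itself; analyticity on a punctured one-sided neighbourhood gives no expansion at the endpoint. Your argument that ``$q$ vanishes to first order because $\cos\theta(t)\to\dot q_0/(\omega p_0)$'' is circular: it presupposes that $\dot q_0$ and $\theta_0$ exist, which is what needs proof. And the scenario you must exclude is precisely one in which they do \emph{not} exist: since the maximizer $\theta_M$ near $R=q=0$ is governed by the direction $\varphi=\arg(R,q)$, the dangerous extremals are those for which $(R(t),q(t))$ spirals into the origin with $\varphi(t)\to+\infty$, so that $\theta(t)$, $\dot R(t)$ and $\dot q(t)$ all oscillate without limit and the quotients $R/(t-t_0)$, $q/(t-t_0)$ carry no pointwise information about $\dot R$, $\dot q$. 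The paper's proof is devoted entirely to killing this spiral: it blows up the point $R=q=0$ via $R=\lambda\cos\varphi$, $q=\lambda\sin\varphi$, shows $\dot\varphi\le-1/(ct)$ forces $\varphi\to+\infty$, and then derives the differential inequality $\frac{d\nu}{d\varphi}\ge-c\nu^2$ for $\nu=\lambda S(\theta_S)$ to conclude $\lambda\ge1/(c\varphi)$, whence $\dot\varphi\le c\varphi$ and $\varphi$ stays bounded --- the contradiction. Without some substitute for this quantitative blow-up analysis, your proof does not go through; the clean cancellation at the end only handles the case that the paper shows, with considerable effort, is the only one that cannot occur trivially.
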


\begin{proof}

    Denote
    \[
        \mathbb{T} = \{t:R(t)=q(t)=0\}.
    \]
    We want to show that $\mathbb{T}=\emptyset$. 

    First, we will show that the Lebesgue measure of set $\mathbb{T}$ is zero. We will prove this by contradiction: suppose that the measure of $\mathbb{T}$ is positive. Then almost every point in $\mathbb{T}$ is a Lebesgue point and, therefore, a limit point of $\mathbb{T}$. Since the derivatives $\dot R(t)$ and $\dot q(t)$ exist for almost all $t$, we obtain $\dot R(t) = \dot q(t) = 0$ for almost all $t \in \mathbb{T}$. Thus, for almost all $t \in \mathbb{T}$, we have
    \[
        \begin{aligned}
            0=\dot R =&\,r_x\cos\theta - \frac12 \frac\gamma\omega R(1+\sin^2\theta) + \frac\gamma\omega\sin\theta = r_x\cos\theta + \frac\gamma\omega\sin\theta\\
            0=\dot q =&\,p\cos\theta + \frac12\frac\gamma\omega q(1+\sin^2\theta) = p\cos\theta.
        \end{aligned}
    \]
    The conjugate variables $(p,q)$ cannot both be zero simultaneously. Since $q(t) = 0$ for $t \in \mathbb{T}$, we have $p(t) \neq 0$ for $t \in \mathbb{T}$. Therefore, for almost all $t \in \mathbb{T}$, we have $0 = \cos\theta = r_x\cos\theta + \frac{\gamma}{\omega}\sin\theta$, which is impossible since $\frac{\gamma}{\omega} \neq 0$. Thus, the measure of the set $\mathbb{T}$ is zero.

    Now we will show (again by contradiction) that the set $\mathbb{T}$ is actually empty. If this is not the case, then there exists an instant $t_0$ such that $R(t_0) = q(t_0) = 0$. Since $\mathbb{T}$ has measure zero, there exists an instant $\tau \notin \mathbb{T}$ in any given neighborhood of $t_0$. Without loss of generality, we assume $\tau > t_0$ (the case $\tau < t_0$ is similar). Let $t_1 = \sup{\mathbb{T} \cap (-\infty;\tau)}$. Since $\mathbb{T}$ is closed, we have $t_1 \in \mathbb{T}$ and $t_1<\tau$. Moreover, $(t_1,\tau] \cap \mathbb{T} = \emptyset$. Thus, we have found a moment in time $t_1$ such that $R(t_1) = q(t_1) = 0$ and $R^2(t) + q^2(t) \neq 0$ in the right neighborhood of $t_1$. The PMP system is autonomous, so we can assume that $t_1 = 0$ without loss of generality.

    Let us perform a blow-up of the point $R=q=0$: in polar coordinates, $R=\lambda\cos\varphi$ and $q=\lambda\sin\varphi$, where $\lambda > 0$. We have
    \[
        \begin{cases}
            \frac1\omega\dot\lambda = (r_x\cos\varphi+p\sin\varphi)\cos\theta + \frac\gamma\omega\cos\varphi\sin\theta + O(\lambda)\\
            \frac1\omega\dot\varphi = \frac1\lambda\left[
                (p\cos\varphi - r_x\sin\varphi)\cos\theta - \frac\gamma\omega\sin\varphi\sin\theta + O(\lambda)
            \right]
        \end{cases}
    \]
    The function $\lambda(t)$ is Lipschitz continuous. The function $\varphi(t)$ is locally Lipschitz continuous in a right neighborhood of $t=0$, as in this neighborhood $\lambda(t)>0$. Hence we are not assuming $\varphi\in[0;2\pi]$ to preserve the continuity. Instead, we assume $\varphi\in\R$ which will be very convenient it what follows.

    The Pontryagin function in the new coordinates takes the form
    \[
        \frac1\omega\mathcal{H} = -\frac12\frac\gamma\omega pr_x + \lambda \left[
            (-p\cos\varphi+r_x\sin\varphi)\cos\theta + \frac\gamma\omega\sin\varphi\sin\theta + \alpha
        \right] \to\max_\theta,
    \]
    where the remainder term $\alpha$ is small together with its derivatives with respect to $\lambda$, i.e., $\alpha=O(\lambda)$ and $\alpha'=O(\lambda)$. The key term $S(\theta)=(-p\cos\varphi+r_x\sin\varphi)\cos\theta+\frac\gamma\omega\sin\varphi\sin\theta$ has a unique local (and global) maximum with respect to $\theta$ (denoted as $\theta_S$), satisfying
    \[
        \begin{pmatrix}
            \cos\theta_S\\\sin\theta_S
        \end{pmatrix}=
        \frac1{S(\theta_S)}
        \begin{pmatrix}
            -p\cos\varphi+r_x\sin\varphi\\
            \frac\gamma\omega\sin\varphi
        \end{pmatrix},
        \quad\text{where}\quad
        S(\theta_S)=\sqrt{(r_x\sin\varphi-p\cos\varphi)^2 + \frac{\gamma^2}{\omega^2}\sin^2\varphi}.
    \]
    Moreover, at the maximum point, it is obvious that $S''_{\theta\theta}(\theta_S)=-S(\theta_S)<0$. Therefore, by the implicit function theorem, for small enough $\lambda>0$, we have
    \[
        \theta_M = \theta_S + O(\lambda).
    \]
    Substituting the obtained expansion for $\theta_M$ into the formulas for $\dot\lambda$ and $\dot\varphi$, we obtain
    \[
        \begin{cases}
            \frac1\omega\dot\lambda = \frac1{S(\theta_S)}\left[ -pr_x\cos2\varphi + \frac12(r_x^2 + \frac{\gamma^2}{\omega^2}- p^2)\sin2\varphi\right] + O(\lambda)\\
            \frac1\omega\dot\varphi = -\frac1\lambda\left[
                S(\theta_S) + O(\lambda)
            \right]
        \end{cases},
    \]

    Let's estimate the factors on the right-hand side. On one hand, the function $S(\theta_S)$ is continuous and $S(\theta_S)>0$ if $p\neq 0$, which means that there exists a constant $c$ such that $c\geq S(\theta_S)\geq \frac{1}{c}>0$ for small $t$. On the other hand, the function $\lambda(t)$ is Lipschitz continuous, $\lambda(t)\ge 0$, and $\lambda(t)=0$. Therefore, we have $0\le \lambda(t)\leq ct$ (we can appropriately increase constant $c$).

    Function $S(\theta_S)$ is bounded, positive, and separated from zero in a neighborhood of $t=0$. Therefore, the inequalities $0\leq \lambda(t)\leq ct$ imply $\dot\varphi\leq -\frac{1}{ct}$ (by increasing $c$). Hence, function $\varphi(t)$ monotonically decreases, and $\varphi(t)\to +\infty$ as $t\to +0$, since $\frac{1}{t}$ has non integrable singularity at $0$.

    Since function $\varphi(t)$ is monotonic, we can consider $\varphi$ as an independent variable. Let's calculate $d\lambda/d\varphi$:
    \[
        \frac{d\lambda}{d\varphi} = 
        \frac{pr_x\cos2\varphi-\frac12(r_x^2 + \frac{\gamma^2}{\omega^2}- p^2)\sin2\varphi}{-pr_x\sin2\varphi - \frac12(r_x^2+\frac{\gamma^2}{\omega^2}-p^2)\cos2\varphi + \frac12(r_x^2+\frac{\gamma^2}{\omega^2}+p^2)}\,\lambda + O(\lambda^2),
    \]
    or
    \[
        \frac{d\lambda}{d\varphi} = 
        -\frac{A\cos2\varphi+B\sin2\varphi}{A\sin2\varphi - B\cos2\varphi - C}\,\lambda + O(\lambda^2),
    \]
    where $A=pr_x$, $B=-\frac12(r_x^2+\frac{\gamma^2}{\omega^2} - p^2)$, and $C=\frac12(r_x^2+\frac{\gamma^2}{\omega^2} + p^2)$. As mentioned earlier, $C^2 > A^2 + B^2$ as $p|_{t=0}\neq 0$.

    The key observation is that the derivative of the denominator of the fraction coincides with the numerator up to the factor $\frac12$ and a term $O(\lambda)$ of no importance:
    \[
        \frac{d}{d\varphi}(A\sin2\varphi - B\cos2\varphi - C) = 
        2(A\cos2\varphi+B\sin2\varphi) + (A'_\varphi\sin2\varphi - B'_\varphi\cos2\varphi - C'_\varphi)
    \]
    Indeed, functions $A(t)$, $B(t)$, and $C(t)$ are Lipschitz continuous. Hence $A'_\varphi=\dot A/\dot\varphi = O(\lambda)$, and similarly $B'_\varphi=O(\lambda)$ and $C'_\varphi=O(\lambda)$. The denominator $A\sin2\varphi-B\cos2\varphi-C$ is separated from $0$ as $\varphi\to+\infty$. So
    \[
        \frac{d\lambda}{d\varphi} = 
        -\lambda \frac{d}{d\varphi}\Big[\frac12\ln(-A\sin2\varphi + B\cos2\varphi + C)\Big] + O(\lambda^2)
    \]
    The expression under the logarithm is precisely $S^2(\theta_S)$. Thus,
    \[
        \frac{d\lambda}{d\varphi} = -\lambda \frac{d}{d\varphi}\ln S(\theta_S) + O(\lambda^2),
    \]
    
    Denote $\nu=\lambda S$. Then $\frac1c \nu \le \lambda \le c\nu$ as $\varphi\to+\infty$ and
    \[
        \frac{d\nu}{d\varphi} = S(\theta_S)\frac{d\lambda}{d\varphi} + \lambda\frac{dS(\theta_S)}{d\varphi} = S\cdot O(\lambda^2) = O(\nu^2).
    \]
    Since $\varphi\to+\infty$ and $\nu\to+0$ as $t\to+0$, we only need a lower bound estimate: $\frac{d\nu}{d\varphi}\ge -c\nu^2$, which guarantees that function $\nu(\varphi)$ cannot converge to zero too quickly as $\varphi\to+\infty$. Indeed, from this estimate, it follows that $\nu\ge\frac1c\frac1\varphi$ for large $\varphi$. Consequently, $\lambda\ge\frac1c\frac1\varphi$. Thus, $\dot\varphi\le \frac c\lambda\le c\varphi$, and therefore, the function $\varphi(t)$ is bounded for small $t$, which contradicts the asymptotic behavior $\varphi(t)\to+\infty$ as $t\to +0$.
    
\end{proof}

Thus, the optimal trajectory $(r_x(t),R(t),p(t),q(t))$ lies within the domain of analyticity of the function $\theta_M(r_x,R,p,q)$ for all $t\in(0;T]$. The optimal trajectory is a solution of the system of ordinary differential equations of the maximum principle and lies within the domain of analyticity of the right-hand side for all $t\ne 0$. Therefore, according to the Cauchy-Kovalevskaya theorem, its solution is analytic.

It remains to note that if $R(\tau) = 0$ at some instant $\tau$, then $q(\tau) \neq 0$. Therefore, according to the Pontryagin's maximum principle, the control $\theta(\tau)$ maximizes the following function
\[
    q\big(r_x\cos\theta + \frac\gamma\omega\cos\theta\big)\to\max_\theta
\]
Therefore $\cos\theta(\tau)=\sgn q(\tau) r_x/\sqrt{r_x^2(\tau) + \frac{\gamma^2}{\omega^2}}$, $\sin\theta(\tau)=\sgn q(\tau) \frac\gamma\omega/\sqrt{r_x^2(\tau) + \frac{\gamma^2}{\omega^2}}$, and hence $\dot R(\tau)=\sgn q(\tau)\omega\sqrt{r_x^2(\tau) + \frac{\gamma^2}{\omega^2}}$

\section{Proof of Proposition~\ref{prop:pureness}}
\label{sec:appendix_proof_prop:pureness}

Since $|R|\le 1$, it follows from formula ~\eqref{eq:pureness_derivative} that the maximum growth rate of $z=\sqrt{r_x^2+R^2}$ is achieved when $\theta=\frac\pi2 \sgn R$, and the minimum is achieved when $\theta=-\frac\pi2\sgn R$. Therefore,
\[
-\frac\gamma\omega (z^2+2|R|+R^2)\le
\frac1\omega \frac{\D}{\D t}z^2 \le -\frac\gamma\omega (z^2-2|R|+R^2)
\]
Estimates $|R|\le z\le 1$ imply
\begin{equation}
\label{eq:dot_z_estimate}
-\gamma(z+1) \le
\dot z\le
\gamma (1-z).
\end{equation}
The left and right equality is achieved with an appropriate choice of $\theta$ only if $|R|=z$, which implies $r_x=0$. It should be noted that if $\theta=\pm\frac{\pi}{2}$ and $r_x=0$, then $\dot{r}_x=0$. Thus, the optimal motions in the problem with endpoint constraints $|\mathbf{r}(0)|=\mu_0$, $|\mathbf{r}(T)|=\mu_1$ are movements along the axis ${r_x=0}$, where $\theta=\pm\frac{\pi}{2}$, and the sign is chosen to satisfy the equality in the left or right estimate of equation~\eqref{eq:dot_z_estimate}. Consequently, function $R(t)$ is monotonic, $|R(t)|=z(t)$, and $R(t)$ does not change sign. Without loss of generality, we assume that $R(t)\ge 0$. Therefore,
\[
	\theta = \begin{cases}
		\frac\pi2,&\text{if }\mu_0<\mu_1;\\
		-\frac\pi2,&\text{if }\mu_0>\mu_1.
	\end{cases}
\]
or $\theta=\sigma\frac\pi2$. Then $\dot z = \gamma (\sigma-z)$ and $|\sigma-z|=1-\sigma z = ce^{-\gamma t}$. From the condition $z(0)=\mu_0$, we obtain $c=1-\sigma \mu_0$, and from $z(T)=\mu_1$, we find $T=\frac{1}{\gamma}(\ln(1-\sigma \mu_0)-\ln(1-\sigma \mu_1))$.

\bibliographystyle{unsrturl.bst}

\bibliography{bib.bib}

\end{document}